\def\a{\alpha}
\def\b{\beta}
\def\ot{\otimes}
\def\rt{\triangleright}
\def\lt{\triangleleft}
\numberwithin{equation}{section}
\newtheorem{theorem}{Theorem}[section]
\newtheorem{proposition}[theorem]{Proposition}
\theoremstyle{definition}
\newtheorem{remark}[theorem]{Remark}
\title{Lagrangian Dynamics on Matched Pairs}
\author{O\~gul Esen} 
\address{Department of Mathematics, Gebze Technical University, 41400, Gebze, Kocaeli, Turkey}
\email{oesen@gtu.edu.tr}
\author{Serkan S\"utl\"u}
\address{Department of Mathematics, I\c{s}{\i}k University, 34980, \c{S}ile, Istanbul, Turkey}
\email{serkan.sutlu@isikun.edu.tr}
\begin{document}
\maketitle

\begin{abstract}
Given a matched pair of Lie groups, we show that the tangent bundle
of the matched pair group is isomorphic to the matched pair of the tangent
groups. We thus obtain the Euler-Lagrange equations on the trivialized matched pair of
tangent groups, as well as the Euler-Poincar\'{e} equations on the matched pair of Lie
algebras. We show explicitly how these equations cover those of the semi-direct product theory.
In particular, we study the trivialized, and the reduced Lagrangian dynamics on the group $SL(2,\mathbb{C})$.
\end{abstract}



\section{Introduction}

Lie groups are configuration spaces of many physical systems such as rigid
body, fluid and plasma theories \cite{marsden1982group}. As a
result, there are extensive studies investigating the
geometry underlying both the Lagrangian and the Hamiltonian dynamics on Lie
groups \cite{arnold1989mathematical, holm2009geometric, MarsdenRatiu-book,
libermann2012symplectic}.

The Lagrangian formulation of a system whose configuration space is a Lie group $G$
is available on the tangent
bundle $TG$ whose (left) trivialization is a
semi-direct product $G\ltimes\mathfrak{g}$ of the group $G$ and its Lie
algebra $\mathfrak{g}$. On the trivialized tangent bundle, a real-valued Lagrangian
function(al) $\mathfrak{L} = \mathfrak{L}(g,\xi)$ generates \emph{the trivialized
Euler-Lagrange equations}
\begin{equation}
\frac{d}{dt}\frac{\delta\mathfrak{L}}{\delta\xi}=T_{e}^{\ast}L_{g}\frac
{\delta\mathfrak{L}}{\delta g}-ad_{\xi}^{\ast}\frac{\delta\mathfrak{L}}%
{\delta\xi}. \label{ELEq}%
\end{equation}
Here, $ad_{\xi}^{\ast}:\mathfrak{g}^\ast\to \mathfrak{g}^\ast$ is the coadjoint action of $\xi\in \mathfrak{g}$ on the
linear algebraic dual $\mathfrak{g}^{\ast}$ of $\mathfrak{g}$. For the
trivialized Euler-Lagrange equations we refer the reader to, an incomplete list,
\cite{bou2009hamilton, colombo2013optimal, ColoMart14,
engo2003partitioned, esen2015reductions, esen2015tulczyjew}.

In the presence of a symmetry by the (left) action of $G$, we arrive at a reduced
Lagrangian function(al) $\mathfrak{L}$ which is free from the group variable. In this case, the first term in the right hand side
of \eqref{ELEq} drops, and the
dynamics is governed by the Euler-Poincar\'{e} equations
\begin{equation}
\frac{d}{dt}\frac{\delta\mathfrak{L}}{\delta\xi}=-ad_{\xi}^{\ast}\frac
{\delta\mathfrak{L}}{\delta\xi}, \label{EPEq}%
\end{equation}
on the Lie algebra $\mathfrak{g}$. The Euler-Poincar\'{e} equation has been written for a wide spectrum of Lie groups;
from matrix Lie groups to diffeomorphism groups, see for instance
\cite{holm2009geometric,
MarsdenRatiu-book, esen2011lifts, esen2012geometry, gumral2010geometry, arnol'd1998topological, holm2008geometric} and the references therein. It is also possible to find various
different forms of the Euler-Poincar\'{e} equations in the literature. For
instance, starting with a Lie groupoid instead of a Lie group, it has been
achieved to recast a discrete version of the Euler-Poincar\'{e} equations
\cite{vankerschaver2007euler, marrero2006discrete, marsden2000symmetry}.
Moreover, taking the Lie group to be a semi-direct product of two Lie
groups, the application area of the theory is considerably enhanced including
the dynamics of coupled systems, and the control theory,
\cite{bruveris2011momentum, cendra1998lagrangian, colombo2015lagrangian,
n2001lagrangian, CendMarsPekaRati03, guha2005euler, guha2005geodesic,
holm1998euler,MarsRatiWein84, ratiu1982lagrange}.

The purpose of this work is to study the Lagrangian dynamics, both in
trivialized and reduced forms, on a matched pair Lie group. A
matched pair Lie group $G\bowtie H$ is itself a Lie group that contains
$G$ and $H$ as two non-intersecting Lie subgroups acting on each other along with
certain compatibility conditions
\cite{LuWein90,Maji90,Maji90-II,Majid-book,Take81}. The Lie algebra $\mathfrak{g}%
\bowtie\mathfrak{h}$ of the matched pair group $G\bowtie H$ is the direct sum of the
Lie algebras $\mathfrak{g}$ and $\mathfrak{h}$, as a vector space. The Lie algebra structure of 
$\mathfrak{g} \bowtie\mathfrak{h}$
is determined by the mutual actions of the Lie algebras $\mathfrak{g}$ and
$\mathfrak{h}$, which in turn satisfy the infinitesimal versions of the compatibility conditions that the group actions obey.

We shall denote the left action of $H$ on $G$ by $\vartriangleright :H\times G \to G$, 
and similarly by $\vartriangleleft:H\times G \to H$ the right action of $G$ on $H$. In this setting, 
we show that the Euler-Poincar\'{e} equations \eqref{EPEq} take the form
\begin{align}
\frac{d}{dt}\frac{\delta\mathfrak{L}}{\delta\xi}  &  =-ad_{\xi}^{\ast}%
\frac{\delta\mathfrak{L}}{\delta\xi}+\frac{\delta\mathfrak{L}}{\delta\xi
}\overset{\ast}{\vartriangleleft}\eta+\mathfrak{a}_{\eta}^{\ast}\frac
{\delta\mathfrak{L}}{\delta\eta},\nonumber\\
\frac{d}{dt}\frac{\delta\mathfrak{L}}{\delta\eta}  &  =-ad_{\eta}^{\ast}%
\frac{\delta\mathfrak{L}}{\delta\eta}-\xi\overset{\ast}{\vartriangleright
}\frac{\delta\mathfrak{L}}{\delta\eta}-\mathfrak{b}_{\xi}^{\ast}\frac
{\delta\mathfrak{L}}{\delta\xi}, \label{mEp}%
\end{align}
generated by a reduced Lagrangian function(al) $\mathfrak{L}=\mathfrak{L}(\xi,\eta)$ on the matched
pair of Lie algebras $\mathfrak{g}\bowtie\mathfrak{h}$. We call \eqref{mEp} \emph{the matched Euler-Poincar\'{e} equations}. 
A closer look reveals that the first terms on the right hand
sides of (\ref{mEp}) are those in the individual Euler-Poincar\'{e} equations on the Lie
algebras $\mathfrak{g}$ and $\mathfrak{h}$, respectively, whereas the second
and the third terms are given by the dualizations of the
infinitesimal actions of $\mathfrak{g}$ and $\mathfrak{h}$ on each other.
Therefore, (\ref{mEp}) may
be considered as a non-trivial way of coupling two Euler-Poincar\'{e}
equations in the form of (\ref{EPEq}).

Conversely, if a Lie group $M$ has a matched pair decomposition,
say $M = G\bowtie H$, then its Lie algebra $\mathfrak{m}$ is a matched pair
Lie algebra $\mathfrak{m}=\mathfrak{g}\bowtie\mathfrak{h}$.
In this case, the Euler-Poincar\'{e} equations (\ref{EPEq}) can be written as the
matched Euler-Poincar\'{e} equations in form of (\ref{mEp}). In
other words, the existence of a matched pair decomposition of the configuration
space leads to a decoupling of the dynamics.

Moreover, the dynamics on the matched pairs can be realized as a
generalization of the semi-direct product theory. If, in particular, one of
the group actions in $G\bowtie H$ is trivial, then the matched Euler-Poincar\'{e}
equations reduce to the semidirect product Euler-Poincar\'{e} equations.
More explicitly, if the right action of $H$ on $G$ is trivial, then the equations (\ref{mEp}) reduce to
those considered in \cite{bruveris2011momentum}. Furthermore,
if the Lie group $H$ is Abelian, then (\ref{EPEq}) recovers also the
semi-direct product theory in \cite{cendra1998lagrangian, n2001lagrangian,
CendMarsPekaRati03, guha2005euler, guha2005geodesic,
holm1998euler,MarsRatiWein84, ratiu1982lagrange}. We finally remark that, 
despite the notational resemblance, our formalism and that of the centered semi-direct
products studied recently in \cite{colombo2015lagrangian} are far from being equal.
In other words, the Lagrangian dynamics on matched pair of Lie groups is a generalization
of the semi-direct product theory in a different direction than the centered semi-direct products.

The organization of the paper is as follows. The second section below begins with
the definitions of the matched pairs of Lie groups, and the matched pairs of Lie
algebras. In this section we also compute the tangent lifts and the infinitesimal versions of the
group actions. Using the dualizations between the tangent and the cotangent
spaces, we list the cotangent lifts and the (linear algebraic) duals of the
infinitesimal actions. We present the trivialization of the tangent bundle
$T(G\bowtie H)$ of the matched pair group $G\bowtie H$ together with the
actions of $G$ and $H$ on it. We also show in this section that the tangent bundle
$T(G\bowtie H)$ is isomorphic to the matched pair group $TG\bowtie
TH$. Finally we discuss several reductions of $TG\bowtie TH$.

On the third section we first review the Lagrangian dynamics on Lie
groups, and then we derive the matched Euler-Lagrange equations
generated by a Lagrangian function(al) on the (left) trivialization of
$T(G\bowtie H)$. We discuss in detail the reduction of the
Lagrangian dynamics on $T(G\bowtie H)$ to a Lagrangian dynamics on the matched
pair $H \ltimes (\mathfrak{g}\bowtie \mathfrak{h})$ in the presence of a symmetry given by the (left) action of $G$.
We showed that, under the symmetry of the left action of $H$, the
Lagrangian dynamics on $H \ltimes (\mathfrak{g}\bowtie \mathfrak{h})$ reduces further to the matched
Euler-Poincar\'{e} equations (\ref{mEp}) on $\mathfrak{g}\bowtie\mathfrak{h}$.
In the literature, this is called the reduction by
stages \cite{n2001lagrangian}. Finally, taking the action
of $H$ on $G$ to be trivial, in which case the matched pair group $G\bowtie H$ reduces to the
semi-direct product $G\ltimes H$, we show how the Lagrangian dynamics on matched pairs of Lie groups
covers the Lagrangian dynamics on semi-direct products.

In the fourth section, we illustrate the theory on $SL(2,\mathbb{C})$. Using its matched pair decomposition $SU(2)\bowtie K$
into the special unitary group $SU(2)$ and its half-real form $K$ from \cite{Maji90-II}, we exhibit the matched Euler-Lagrange equations on
the (left) trivialized tangent bundle $T\left(  SL\left(
2,\mathbb{C}\right)  \right) $ and the matched Euler-Poincar\'{e} equations on the matched pair Lie algebra $\mathfrak{sl}(2,\mathbb{C})$.

Throughout the text $G$ and $H$ will denote two Lie groups, with Lie algebras
$\mathfrak{g}$ and $\mathfrak{h}$, respectively. The generic elements will be denoted as follows:
\begin{align}
g,g_{1},g_{2}  &  \in G,\qquad h,h_{1},h_{2}\in H,\qquad\xi,\xi_{1},\xi_{2}%
\in\mathfrak{g},\qquad\eta,\eta_{1},\eta_{2}\in\mathfrak{h},\qquad U_{g}\in
T_{g}G,\nonumber\\
V_{h}  &  \in T_{h}H,\qquad\mu\in\mathfrak{g}^{\ast},\qquad\nu\in
\mathfrak{h}^{\ast},\qquad\alpha_{g}\in T_{g}^{\ast}G,\qquad\beta_{h}\in
T_{h}^{\ast}H. \label{notation}%
\end{align}
Let $\phi:G\to H$ be a differentiable map. Then, its tangent lift is a map $T\phi:TG\to TH$. 
Using the dualization between tangent and
cotangent spaces, the cotangent lift $T^{\ast}\phi:T^{\ast}H \to T^{\ast}G$ is given by
\begin{equation}
\left\langle T_{g}\phi (U_{g}),\beta_{h}\right\rangle =\left\langle U_{g}%
,T_{g}^{\ast}\phi(\beta_{h})\right\rangle , \label{dual}%
\end{equation}
where $\phi\left(  g\right)  =h$.


\section{Matched Pairs}

\subsection{Matched pairs of Lie groups and matched pairs of Lie algebras}\label{matched-actions}

In this subsection we recall the basics on the matched pairs of Lie groups. Further details on the subject can be found in \cite{LuWein90,Maji90,Maji90-II,Majid-book,Take81}.

Let $H$ and $G$ be two Lie groups. Assume that the group $H$ acts on the group
$G$ from the left, that is, there exists a differentiable mapping
\begin{equation}
\rho:H\times G\rightarrow G,\qquad\left(  h,g\right)  \mapsto
h\vartriangleright g \label{rho}%
\end{equation}
satisfying
\begin{equation}
h_{1}h_{2}\vartriangleright g=h_{1}\vartriangleright\left(  h_{2}%
\vartriangleright g\right)  \hbox{ \ \ and \ \ }e_{H}\vartriangleright g=g,
\label{HonG1}%
\end{equation}
where $h_{1},h_{2}\in H$, $g\in G$, and $e_{H}$ is the identity element in
$H$. Assume also that $G$ acts on $H$ from the right, that is, there exists a
differentiable mapping
\begin{equation}
\sigma:H\times G\rightarrow H,\qquad\left(  h,g\right)  \mapsto
h\vartriangleleft g \label{sigma}%
\end{equation}
such that
\begin{equation}
h\vartriangleleft g_{1}g_{2}=\left(  h\vartriangleleft g_{1}\right)
\vartriangleleft g_{2}\hbox{ \ \ \ \ and \ \ }h\vartriangleleft e_{G}=h,
\label{GonH1}%
\end{equation}
where $g_{1},g_{2}\in G$, $h\in H$, and $e_{G}$ is the identity element in
$G$. A matched pair of Lie groups $G\bowtie H$ is the Cartesian product $G\times
H$ equipped with the actions \eqref{rho} and \eqref{sigma} satisfying the
compatibility conditions
\begin{align}
{\label{condmp}}h\vartriangleright\left(  g_{1}g_{2}\right)   &  =\left(
h\vartriangleright g_{1}\right)  \left(  \left(  h\vartriangleleft
g_{1}\right)  \vartriangleright g_{2}\right)  ,\\
(h_{1}h_{2})\vartriangleleft g  &  =\left(  h_{1}\vartriangleleft\left(
h_{2}\vartriangleright g\right)  \right)  \left(  h_{2}\vartriangleleft
g\right).
\end{align}
The group structure on the matched pair $G\bowtie H$ is given by
\[
\varpi_{G\bowtie H}\left(  \left(  g_{1},h_{1}\right) , \left(  g_{2}%
,h_{2}\right)  \right)  =\left(  g_{1}\left(  h_{1}\vartriangleright
g_{2}\right)  ,\left(  h_{1}\vartriangleleft g_{2}\right)  h_{2}\right)
=\left(  g_{1}\rho\left(  h_{1},g_{2}\right)  ,\sigma\left(  h_{1}%
,g_{2}\right)  h_{2}\right).
\]
The unit element is $\left(e_{G},e_{H}\right)$, and the inversion is given by
\begin{equation}
\left(  g,h\right)  ^{-1}=\left(  h^{-1}\vartriangleright g^{-1}%
,h^{-1}\vartriangleleft g^{-1}\right)  . \label{invelement}%
\end{equation}
We note that $G$ and $H$ are both subgroups of $G\bowtie H$ by the inclusions
\[
G\hookrightarrow G\bowtie H:g\rightarrow\left(  g,e_{H}\right)  ,\qquad
H\hookrightarrow G\bowtie H:h\rightarrow\left(  e_{G},h\right)  .
\]
The converse is also true.
\begin{proposition}\label{Majid-prop}\cite[Prop. 6.2.15]{Majid-book}.
If a Lie group $M$ is a Cartesian product of two subgroups $G\hookrightarrow M \hookleftarrow H$,
and if the multiplication on $M$ is a bijection $G\times H\simeq M$, then
$M$ is a matched pair, that is, $M\cong G\bowtie H$. In this case, the mutual
actions are given by
\begin{equation}
h\cdot g=\left(  h\vartriangleright g\right)  \left(  h\vartriangleleft
g\right), \label{aux-mutual-actions-group}%
\end{equation}
for any $g\in G$, and any $h\in H$.
\end{proposition}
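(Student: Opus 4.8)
The plan is to convert the unique factorization $M\simeq G\times H$ into a source of the two mutual actions, and then to read off every axiom from the associativity of $M$ together with the uniqueness of the decomposition. Concretely, since multiplication $G\times H\to M$, $(g,h)\mapsto gh$, is a bijection, for any $h\in H$ and $g\in G$ the element $hg\in M$ has a unique expression $hg=g'h'$ with $g'\in G$ and $h'\in H$. I would \emph{define}
\[
h\vartriangleright g:=g',\qquad h\vartriangleleft g:=h',
\]
so that by construction $h\cdot g=(h\vartriangleright g)(h\vartriangleleft g)$, which is precisely \eqref{aux-mutual-actions-group}. The normalization conditions in \eqref{HonG1} and \eqref{GonH1} are then immediate, since $e_{H}g=ge_{H}$ and $he_{G}=e_{G}h$ force $e_{H}\vartriangleright g=g$, $e_{H}\vartriangleleft g=e_{H}$, $h\vartriangleright e_{G}=e_{G}$, and $h\vartriangleleft e_{G}=h$.

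For the remaining action laws and the two compatibility conditions I would expand a triple product in two different ways and compare components. To get the left-action law together with the second compatibility identity, compute $(h_{1}h_{2})g$ both directly and as $h_{1}(h_{2}g)=h_{1}(h_{2}\vartriangleright g)(h_{2}\vartriangleleft g)$, then commute $h_{1}$ past the $G$-element $h_{2}\vartriangleright g$ using the definition once more; matching the $G$- and $H$-parts and invoking uniqueness yields simultaneously $(h_{1}h_{2})\vartriangleright g=h_{1}\vartriangleright(h_{2}\vartriangleright g)$ and $(h_{1}h_{2})\vartriangleleft g=(h_{1}\vartriangleleft(h_{2}\vartriangleright g))(h_{2}\vartriangleleft g)$. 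Symmetrically, expanding $h(g_{1}g_{2})$ both as a single product and as $(hg_{1})g_{2}=(h\vartriangleright g_{1})(h\vartriangleleft g_{1})g_{2}$, and commuting the $H$-element $h\vartriangleleft g_{1}$ past $g_{2}$, gives the right-action law $h\vartriangleleft(g_{1}g_{2})=(h\vartriangleleft g_{1})\vartriangleleft g_{2}$ together with the first compatibility condition in \eqref{condmp}. Finally, to identify $M$ with $G\bowtie H$ as groups, I would expand $(g_{1}h_{1})(g_{2}h_{2})=g_{1}(h_{1}\vartriangleright g_{2})(h_{1}\vartriangleleft g_{2})h_{2}$ and read off its unique $G$- and $H$-parts; these reproduce exactly $\varpi_{G\bowtie H}$, so the factorization $m=gh\mapsto(g,h)$ is the desired group isomorphism.

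All of these computations are purely formal, driven only by associativity and the uniqueness of the factorization, so the only genuine content is analytic rather than algebraic: one must verify that $\vartriangleright$ and $\vartriangleleft$ are \emph{differentiable}, not merely set maps, and this is where I expect the real obstacle to lie. The actions are the compositions of $(h,g)\mapsto hg$ with the two projections $M\to G$ and $M\to H$ supplied by the inverse of the multiplication bijection, so smoothness of the actions follows as soon as that inverse is smooth, that is, once the bijection $G\times H\to M$ is known to be a diffeomorphism. In the finite-dimensional setting this is automatic: multiplication is smooth, its differential at $(e_{G},e_{H})$ is $(\xi,\eta)\mapsto\xi+\eta$, and this is an isomorphism $\mathfrak{g}\oplus\mathfrak{h}\to\mathfrak{m}$ because $G\cap H=\{e\}$ (itself forced by injectivity of the factorization) gives $\mathfrak{g}\cap\mathfrak{h}=0$, while the dimensions necessarily match since the map is a bijection; translating by group elements propagates the local diffeomorphism property everywhere. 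With that point settled, only the left/right bookkeeping above needs to be kept straight.
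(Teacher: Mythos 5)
Your proof is correct. Note that the paper does not actually prove this proposition---it quotes it from \cite[Prop.~6.2.15]{Majid-book}---and your argument (defining $h\vartriangleright g$ and $h\vartriangleleft g$ as the unique $G$- and $H$-factors of $hg$, deriving the action laws \eqref{HonG1}, \eqref{GonH1} and the compatibility conditions \eqref{condmp} from associativity plus uniqueness of the factorization, and then reading off $\varpi_{G\bowtie H}$ to get the group isomorphism) is precisely the standard argument of that reference, with your closing inverse-function-theorem step, upgrading the factorization bijection to a diffeomorphism so that the actions are smooth, being the only genuinely Lie-theoretic addition and also handled correctly.
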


In case the action (\ref{rho}), resp. (\ref{sigma}), is trivial, the matched
pair group $G\bowtie H$ reduces to a semi-direct product $G\ltimes H$, resp. $G\rtimes H$.

The groups $G$ and $H$ act on $G\bowtie H$ both from the left and the right as
follows:%
\begin{align}
G\times\left(  G\bowtie H\right)   &  \rightarrow\left(  G\bowtie H\right)
,\qquad\left(  g_{1},\left(  g_{2},h_{2}\right)  \right)  \mapsto\left(
g_{1}g_{2},h_{2}\right)  ,\nonumber\\
H\times\left(  G\bowtie H\right)   &  \rightarrow\left(  G\bowtie H\right)
,\qquad\left(  h_{1},\left(  g_{2},h_{2}\right)  \right)  \mapsto\left(
h_{1}\vartriangleright g_{2},\left(  h_{1}\vartriangleleft g_{2}\right)
h_{2}\right)  ,\nonumber\\
\left(  G\bowtie H\right)  \times H  &  \rightarrow\left(  G\bowtie H\right)
,\qquad\left(  \left(  g_{1},h_{1}\right)  ,h_{2}\right)  \mapsto\left(
g_{1},h_{1}h_{2}\right)  ,\nonumber\\
\left(  G\bowtie H\right)  \times G  &  \rightarrow\left(  G\bowtie H\right)
,\qquad\left(  \left(  g_{1},h_{1}\right)  ,g_{2}\right)  \mapsto\left(
g_{1}\left(  h_{1}\vartriangleright g_{2}\right)  ,h_{1}\vartriangleleft
g_{2}\right). \label{actions}%
\end{align}
Finally, we record the (left) inner automorphism of $G\bowtie H$ on itself for the later use,
\begin{align}
I_{\left(  g_{1},h_{1}\right)  }\left(  g_{2},h_{2}\right)   &  =\left(
g_{1},h_{1}\right)  \left(  g_{2},h_{2}\right)  \left(  g_{1},h_{1}\right)
^{-1}\nonumber\\
&  =\left(  g_{1}(h_{1}\vartriangleright g_{2})\left(  (h_{1}\vartriangleleft
g_{2})h_{2}h_{1}^{-1}\vartriangleright g_{1}^{-1}\right)  ,\left(
(h_{1}\vartriangleleft g_{2})h_{2}h_{1}^{-1}\right)  \vartriangleleft
g_{1}^{-1}\right)  . \label{inner}%
\end{align}

We proceed to the matched pairs of Lie algebras. Let $\mathfrak{g}$ and $\mathfrak{h}$ be two Lie algebras equipped with
\[
\rt:\mathfrak{h}\ot\mathfrak{g}\rightarrow\mathfrak{g}%
\hbox{ \ \ and \ \ }\lt:\mathfrak{h}\ot\mathfrak{g}\rightarrow\mathfrak{h}%
\]
via which $\mathfrak{g}$ is a left $\mathfrak{h}$-module,
\[
\lbrack\eta_{1},\eta_{2}]\vartriangleright\xi=\eta_{1}\vartriangleright
(\eta_{2}\vartriangleright\xi)-\eta_{2}\vartriangleright(\eta_{1}%
\vartriangleright\xi),
\]
and $\mathfrak{h}$ is a right $\mathfrak{g}$-module
\[
\eta\vartriangleleft\lbrack\xi_{1},\xi_{2}]=(\eta\vartriangleleft\xi
_{1})\vartriangleleft\xi_{2}-(\eta\vartriangleleft\xi_{2})\triangleleft\xi
_{1},
\]
in such a way that
\begin{equation}
\eta\vartriangleright\lbrack\xi_{1},\xi_{2}]=[\eta\vartriangleright\xi_{1}%
,\xi_{2}]+[\xi_{1},\eta\vartriangleright\xi_{2}]+(\eta\vartriangleleft\xi
_{1})\vartriangleright\xi_{2}-(\eta\vartriangleleft\xi_{2})\vartriangleright
\xi_{1} \label{LAc1}%
\end{equation}
and
\begin{equation}
\lbrack\eta_{1},\eta_{2}]\lt\xi=[\eta_{1},\eta_{2}\lt\xi]+[\eta_{1}\lt\xi
,\eta_{2}]+\eta_{1}\vartriangleleft(\eta_{2}\vartriangleright\xi)-\eta
_{2}\vartriangleleft(\eta_{1}\vartriangleright\xi) \label{LAc2}%
\end{equation}
are satisfied for any $\eta,\eta_{1},\eta_{2}\in\mathfrak{h}$, and any
$\xi,\xi_{1},\xi_{2}\in\mathfrak{g}$. Then the pair $(\mathfrak{g},\mathfrak{h})$ is called a matched
pair of Lie algebras. In this case $\mathfrak{g\bowtie h}:=\mathfrak{g}
\oplus\mathfrak{h}$ becomes a Lie algebra with the bracket
\begin{equation}
\lbrack(\xi_{1},\eta_{1}),\,(\xi_{2},\eta_{2})]=\left(  [\xi_{1},\xi_{2}%
]+\eta_{1}\vartriangleright\xi_{2}-\eta_{2}\vartriangleright\xi_{1}%
,\,[\eta_{1},\eta_{2}]+\eta_{1}\vartriangleleft\xi_{2}-\eta_{2}%
\vartriangleleft\xi_{1}\right)  . \label{mpla}%
\end{equation}

It is immediate that both $\mathfrak{g}$ and $\mathfrak{h}$ are Lie
subalgebras of $\mathfrak{g}\bowtie\mathfrak{h}$ via the obvious inclusions.
Conversely, given a Lie algebra $\mathfrak{m}$ with two subalgebras
$\mathfrak{g} \hookrightarrow \mathfrak{m} \hookleftarrow \mathfrak{h}$, if $\mathfrak{m}%
\simeq \mathfrak{g}\oplus\mathfrak{h}$ via $(\xi,\eta)\mapsto\xi+\eta$, then
$\mathfrak{m}=\mathfrak{g}\bowtie\mathfrak{h}$ as Lie algebras. In this case,
the mutual actions of $\mathfrak{g}$ on $\mathfrak{h}$ and $\mathfrak{h}$ on
$\mathfrak{g}$ are uniquely determined by
\[
[\eta,\xi]=(\eta\vartriangleright\xi,\,\eta\vartriangleleft\xi).
\]

We note that if $G\bowtie H$ is a matched pair
of Lie groups, then its Lie algebra is a matched pair of Lie
algebras $\mathfrak{g}\bowtie\mathfrak{h}$. However, there are integrability conditions under which a matched pair of Lie
algebras can be integrated into a matched pair of Lie groups. For a discussion of this
direction we refer the reader to \cite[Sect. 4]{Maji90-II}.

We conclude this subsection with the infinitesimal adjoint action of the group $G\bowtie H$ to
its Lie algebra $\mathfrak{g}\bowtie\mathfrak{h}$. Ddifferentiating the inner automorphism (\ref{inner}), for any $(g,h) \in G \bowtie H$, and any $(\xi,\eta)\in \mathfrak{g}\bowtie \mathfrak{h}$, we obtain
\begin{equation}
Ad_{(g,h)^{-1}}(\xi,\eta)=(h^{-1}\vartriangleright\zeta,T_{h^{-1}}R_{h}
(h^{-1}\vartriangleleft\zeta)+Ad_{h^{-1}}(\eta\vartriangleleft g)) \label{Ad}%
\end{equation}
where $\zeta := Ad_{g^{-1}\xi}+T_gL_{g^{-1}}(\eta\vartriangleright
g) \in \mathfrak{g}$.

\subsection{Tangent lifts of the actions}\label{tangent-lifts}

Freezing the arguments of the action \eqref{rho}, we arrive at two
differential mappings
\begin{align*}
\rho_{g}  &  :H\rightarrow G\qquad h\mapsto h\vartriangleright g,\qquad \text{and} \qquad\rho_{h}:G\rightarrow G\qquad g\mapsto h\vartriangleright g,
\end{align*}
derivatives of which are given by
\begin{equation}
T_{h}\rho_{g}:T_{h}H\rightarrow T_{h\vartriangleright g}G,\qquad T_{g}\rho
_{h}:T_{g}G\rightarrow T_{h\vartriangleright g}G, \label{tliftrho}%
\end{equation}
respectively. Over the identity, that is for the Lie algebras
$\mathfrak{h}=T_{e_{H}}H$ and $\mathfrak{g}=T_{e_{G}}G$, these mappings become
\begin{align*}
T_{e_{H}}\rho_{g}  &  :\mathfrak{h}\rightarrow T_{g}G, \qquad\eta\mapsto
\eta\vartriangleright g:=T_{e_{H}}\rho_{g}\left(  \eta\right)  ,\\
T_{e_{G}}\rho_{h}  &  :\mathfrak{g}\rightarrow\mathfrak{g}, \qquad\xi\mapsto
h\vartriangleright\xi:=T_{e_{G}}\rho_{h}\left(  \xi\right).
\end{align*}
Using the tangent lift $T_{h}\rho_{g}:T_{h}H\rightarrow T_{h\vartriangleright g}G$, we define a
mapping
\begin{equation}
\tilde{\rho}:TH\times G\rightarrow TG, \qquad\left(  V_{h},g\right)  \mapsto
V_{h}\vartriangleright g:=T_{h}\rho_{g}\left(  V_{h}\right)  \in
T_{h\vartriangleright g}G, \label{rho-tilda}%
\end{equation}
and similarly, using the infinitesimal action $T_{g}\rho_{h}:T_{g}G\rightarrow T_{h\vartriangleright g}G$, we define the action of the
group $H$ on the tangent bundle $TG$ as
\begin{equation}
\hat{\rho}:H\times TG\rightarrow TG, \qquad\left(  h,U_{g}\right)  \mapsto
h\vartriangleright U_{g}:=T_{g}\rho_{h}\left(  U_{g}\right)  \in
T_{h\vartriangleright g}G. \label{rho-hat}%
\end{equation}
In particular, we obtain the action of the group $H$ on the Lie algebra $\mathfrak{g}$,
\begin{equation}
\hat{\rho}:H\times\mathfrak{g}\rightarrow\mathfrak{g}, \qquad\left(
h,\xi\right)  \mapsto h\vartriangleright\xi:=T_{e}\rho_{h}\left(  \xi\right)
. \label{rho-hat-algebra}%
\end{equation}
Differentiating (\ref{rho-tilda}) and (\ref{rho-hat}) with respect to the group variables,
we obtain the mappings
\begin{eqnarray}
T_{g}\tilde{\rho}_{V_{h}}:T_{g}G\rightarrow T_{V_{h}\vartriangleright
g}TG, \qquad V_{g}\mapsto T_{g}\tilde{\rho}_{V_{h}}\left(  U_{g}\right)
=:V_{h}\vartriangleright U_{g}, \label{Trho-tilda}%
\\
T_{h}\hat{\rho}_{U_{g}}:T_{h}H\rightarrow T_{h\vartriangleright U_{g}}TG, \qquad
V_{h}\mapsto T_{h}\hat{\rho}_{U_{g}}\left(  V_{h}\right)  =:V_{h}%
\vartriangleright U_{g}, \label{Trho-hat}%
\end{eqnarray}
where (\ref{Trho-tilda}) is the infinitesimal action of the
group $TH$ on $TG$. We also note by
\begin{align}
T_{g}\tilde{\rho}_{V_{h}}\left(  V_{g}\right)   &  =\left.  \frac{d}%
{ds}\right\vert _{s=0}V_{h}\vartriangleright g^{s}=\left.  \frac{d}%
{ds}\right\vert _{s=0}\left.  \frac{d}{dt}\right\vert _{t=0}h^{t}%
\vartriangleright g^{s},\nonumber\\
T_{h}\hat{\rho}_{V_{g}}(V_{h})  &  =\left.  \frac{d}{dt}\right\vert
_{t=0}h^{t}\vartriangleright V_{g}=\left.  \frac{d}{dt}\right\vert
_{t=0}\left.  \frac{d}{ds}\right\vert _{s=0}h^{t}\vartriangleright g^{s}
\label{calcVhVg}%
\end{align}
that $T_{g}\tilde{\rho}_{V_{h}}\left(  U_{g}\right) = T_{h}\hat{\rho}_{U_{g}}\left(  V_{h}\right)$. As a result, we use the same
notation $V_{h}\vartriangleright U_{g}$ for both. Pictorially, we summarize these arguments by the following
tangent rhombic \cite{abraham1978foundations}.

\begin{equation}
\xymatrix{& \left(  V_{h}\vartriangleright U_{g}\right)\in TTG   \ar[dl]
\ar[dr]\\
\left(  h\vartriangleright U_{g}\right) \in TG \ar[dr]&& \left(  V_{h}\vartriangleright g\right) \in TG \ar[dl]\\ &(h\vartriangleright g)\in G }
\end{equation}
where the elements are shown in parenthesis. In case $h=e_{H}$, we have
\[
T_{g}\tilde{\rho}_{\eta}\left(  U_{g}\right)  =T_{e_{H}}\hat{\rho}_{U_{g}%
}\left(  \eta\right)  =:\eta\vartriangleright U_{g},
\]
whereas $g=e_{G}$ yields
\begin{equation}
T_{e_{G}}\tilde{\rho}_{V_{h}}\left(  \xi\right)  =T_{h}\hat{\rho}_{\xi}\left(
V_{h}\right)  =:V_{h}\vartriangleright\xi. \label{Vh-xi}%
\end{equation}
Therefore, $h=e_{H}$ and $g=e_{G}$ together renders
\begin{equation}
T_{e_{G}}\tilde{\rho}_{\eta}:\mathfrak{g}\rightarrow\mathfrak{g}, \quad
\xi\mapsto\eta\vartriangleright\xi; \qquad \mathfrak{b}_\xi:=T_{e_{H}}\hat{\rho}_{\xi
}:\mathfrak{h}\rightarrow\mathfrak{g}, \quad\eta\mapsto\eta
\vartriangleright\xi. \label{etaonxi}%
\end{equation}

Similar arguments can be repeated for the action (\ref{sigma}). To this end we
introduce the mappings
\begin{align}\label{sigma-h}
\sigma_{h}  &  :G\rightarrow H,\qquad g\mapsto h\vartriangleleft g,\\
\sigma_{g}  &  :H\rightarrow H,\qquad h\mapsto h\vartriangleleft g,
\end{align}
and their derivatives
\begin{equation}
T_{g}\sigma_{h}:T_{g}G\rightarrow T_{h\vartriangleleft g}H,\qquad T_{e_{G}}\sigma_{h}:\mathfrak{g}\rightarrow T_{h}H,\quad\xi\mapsto
h\vartriangleleft\xi. \label{Tsigg}%
\end{equation}
Over the identities, these tangent mappings reduces to
\begin{equation}
T_{h}%
\sigma_{g}:T_{h}H\rightarrow T_{h\vartriangleleft g}H,\qquad T_{e_{H}}\sigma_{g}:\mathfrak{h}%
\rightarrow\mathfrak{h},\quad\eta\mapsto\eta\vartriangleleft g.
\label{Tidsigma}%
\end{equation}
We also define the mappings
\begin{align}
\tilde{\sigma}  &  :H\times TG\rightarrow TH, \qquad \left(  h,U_{g}\right)
\mapsto h\vartriangleleft U_{g}:=T_{g}\sigma_{h}\left(  U_{g}\right)  \in
T_{h\vartriangleleft g}H,\label{sigma-tilda}\\
\hat{\sigma}  &  :TH\times G\rightarrow TH, \qquad \left(  V_{h},g\right)
\mapsto V_{h}\vartriangleleft g:=T_{h}\sigma_{g}\left(  V_{h}\right)  \in
T_{h\vartriangleleft g}H, \label{sigma-hat}%
\end{align}
from which, freezing the vector entries, we arrive at $\tilde{\sigma
}_{U_{g}}:H\rightarrow TH$ and similarly $\hat{\sigma}_{V_{h}}:G\rightarrow TH$ whose
derivatives are
\begin{align}
T_{h}\tilde{\sigma}_{U_{g}}  &  :T_{h}H\rightarrow T_{h\vartriangleleft U_{g}%
}TH,\qquad V_{h}\mapsto V_{h}\vartriangleleft U_{g},\label{Tsig-tilde}\\
T_{g}\hat{\sigma}_{V_{h}}  &  :T_{g}G\rightarrow T_{V_{h}\vartriangleleft
g}TH,\qquad U_{g}\mapsto V_{h}\vartriangleleft U_{g}. \label{Tsig-hat}%
\end{align}
Then, similar to (\ref{calcVhVg}) we obtain
$T_{h}\tilde{\sigma}_{V_{g}}\left(  V_{h}\right) = T_{g} \hat{\sigma}_{V_{h}}\left(  V_{g}\right)$,
and employ the notation $V_{h}\vartriangleleft U_{g}$ for both. We summarize our discussion in the following tangent rhombic \cite{abraham1978foundations}.
\begin{equation}
\xymatrix{& (V_{h}\vartriangleleft U_{g})\in TTH   \ar[dl]
\ar[dr]\\
(h\vartriangleleft U_{g})\in TH \ar[dr]&& (V_{h}\vartriangleleft g)\in TH \ar[dl]\\ &(h\vartriangleleft g)\in H }
\end{equation}

Furthermore, in case $g=e_{G}$, the tangent mappings \eqref{Tsig-tilde} and
\eqref{Tsig-hat} produce
\[
T_{h}\tilde{\sigma}_{\xi}\left(  V_{h}\right)  =T_{e_{G}}\hat{\sigma}_{V_{h}%
}\left(  \xi\right)  =V_{h}\mapsto V_{h}\vartriangleleft\xi,
\]
and for $h=e_{H}$
\begin{equation}
T_{e_{H}}\tilde{\sigma}_{U_{g}}\left(  \eta\right)  =T_{g}\hat{\sigma}_{\eta
}\left(  U_{g}\right)  =\eta \vartriangleleft U_{g}.
\end{equation}
Therefore, setting $g=e_{G}$ and $h=e_{H}$, we arrive at
\begin{equation}
T_{e_{H}}\tilde{\sigma}_{\xi}:\mathfrak{h}\rightarrow\mathfrak{h},\quad \eta\mapsto\eta\vartriangleleft\xi
;\qquad \mathfrak{a}_\eta:=T_{e_{G}}\hat{\sigma}_{\eta}:\mathfrak{g}\rightarrow\mathfrak{h},\quad \xi\mapsto\eta\vartriangleleft\xi. \label{infadgonh}%
\end{equation}

\subsection{Cotangent lifts of the actions}

In this subsection, we discuss briefly the cotangent lifts of the mappings considered in Subsection \ref{matched-actions}. In other words, in the present subsection we dualize the maps studied in Subsection \ref{tangent-lifts}.

Let us begin with the (right) action of $H$ on the cotangent bundle $T^{\ast}G$ as a cotangent
lift
of the mapping (\ref{rho-hat}). We have,
\begin{equation}
\hat{\rho}^{\ast}:T^{\ast}G\times H\rightarrow T^{\ast}G,\qquad\left(
\alpha_{g},h\right)  \mapsto\alpha_{g}\overset{\ast}{\vartriangleleft
}h := {\hat{\rho}_h}^\ast(\alpha_g) = T_{h^{-1}\vartriangleright g}^{\ast}\rho_{h}\left(  \alpha_{g}\right)  \in
T_{h^{-1}\vartriangleright g}^{\ast}G, \label{left*h}%
\end{equation}
where
\begin{equation}\label{T*rho-h}
\left\langle T_{h^{-1}\vartriangleright g}^{\ast}\rho_{h}\left(  \alpha
_{g}\right)  ,U_{h^{-1}\vartriangleright g}\right\rangle =\left\langle
\alpha_{g},T_{h^{-1}\vartriangleright g}\rho_{h}(U_{h^{-1}\vartriangleright
g})\right\rangle.
\end{equation}
In particular, we have the action
\begin{equation}
\hat{\rho}^{\ast}:\mathfrak{g}^{\ast}\times H\rightarrow\mathfrak{g}^{\ast
},\qquad\left(  \mu,h\right)  \mapsto\mu\overset{\ast}{\vartriangleleft
}h:=T_{e}^{\ast}\rho_{h}(\mu) \label{honmu}%
\end{equation}
of $H$ on $T^\ast_{e_G}G \simeq \mathfrak{g}^\ast$. In short, (\ref{T*rho-h}) and (\ref{honmu}) correspond to
\begin{equation}
\left\langle \alpha_{g}\overset{\ast}{\vartriangleleft}h,U_{h^{-1}%
\vartriangleright g}\right\rangle =\left\langle \left(  \alpha_{g}\right)
,h\vartriangleright U_{h^{-1}\vartriangleright g})\right\rangle
,\hbox{ \ \ }\left\langle \mu\overset{\ast}{\vartriangleleft}h,\xi
\right\rangle =\left\langle \mu,h\vartriangleright\xi)\right\rangle ,
\label{honmucomp}%
\end{equation}
respectively. On the other hand, freezing the covector component of (\ref{left*h}), and in particular of (\ref{honmu}), we obtain
\begin{align}
T_{h}\hat{\rho}_{\alpha_{g}}^{\ast}  &  :T_{h}H\rightarrow T_{\alpha
_{g}\overset{\ast}{\vartriangleleft}h}T^{\ast}G, \qquad V_{h}\mapsto
\alpha_{g}\overset{\ast}{\vartriangleleft}V_{h},\label{Vhong*}\\
T_{h}\hat{\rho}_{\mu}^{\ast}  &  :T_{h}H\rightarrow\mathfrak{g}^{\ast},\qquad
V_{h}\mapsto \mu\overset{\ast}{\vartriangleleft}V_{h}.\label{Vonmu}
\end{align}
On the next step, taking $h=e_{H}$ we obtain
\begin{align}
T_{e_{H}}\hat{\rho}_{\alpha_{g}}^{\ast}  &  :\mathfrak{h}\rightarrow
T_{\alpha_{g}}T^{\ast}G,\qquad\eta\rightarrow\alpha_{g}\overset{\ast
}{\vartriangleleft}\eta\nonumber,\\
T_{e_{H}}\hat{\rho}_{\mu}^{\ast}  &  :\mathfrak{h}\rightarrow\mathfrak{g}%
^{\ast},\qquad\eta\rightarrow\mu\overset{\ast}{\vartriangleleft}\eta.
\label{etaonmu}%
\end{align}
In terms of action duality, (\ref{Vonmu}) and (\ref{etaonmu}) correspond to
\begin{equation}
\left\langle \mu\overset{\ast}{\vartriangleleft}V_{h},\xi\right\rangle
=\left\langle \mu,V_{h}\vartriangleright\xi\right\rangle \hbox{
\ \ }\left\langle \mu\overset{\ast}{\vartriangleleft}\eta,\xi\right\rangle
=\left\langle \mu,\eta\vartriangleright\xi\right\rangle. \label{rholiftdual}%
\end{equation}
Similarly, the dualization of (\ref{sigma-hat}) yields
\begin{equation}
\hat{\sigma}^{\ast}:G\times T^{\ast}H\rightarrow T^{\ast}H,\qquad\left(
g,\alpha_{h}\right)  \mapsto g\overset{\ast}{\vartriangleright}\alpha
_{h}:=T_{h\vartriangleleft g^{-1}}^{\ast}\sigma_{g}\left(  \alpha_{h}\right)
\in T_{h\vartriangleleft g^{-1}}^{\ast}H, \label{T*sigma}%
\end{equation}
and in particular
\begin{equation}\label{T*sigmaonh}
\hat{\sigma}^{\ast}:G\times\mathfrak{h}^{\ast}\rightarrow\mathfrak{h}^{\ast
},\qquad\left(  \nu,g\right)  \mapsto g\overset{\ast}{\vartriangleright}%
\nu:=T_{e_{H}}^{\ast}\sigma_{g}\left(  \nu\right)  .
\end{equation}
Thus, (\ref{T*sigma}) and (\ref{T*sigmaonh}) dualizes (\ref{sigma-hat}) as
\[
\left\langle g\overset{\ast}{\vartriangleright}\alpha_{h},V_{h\vartriangleleft
g^{-1}}\right\rangle =\left\langle \alpha_{h},V_{h\vartriangleleft g^{-1}%
}\vartriangleleft g\right\rangle ,\qquad\left\langle g\overset{\ast
}{\vartriangleright}\nu,\eta\right\rangle =\left\langle \nu,\eta
\vartriangleleft g\right\rangle .
\]
Fixing the covector component of the infinitesimal version of (\ref{T*sigma}) we also have
\begin{align}
T_{g}\hat{\sigma}_{\beta_{h}}^{\ast}  &  :T_{g}G\rightarrow T_{g\overset{\ast
}{\vartriangleright}\beta_{h}}T^{\ast}H, \qquad U_{g}\mapsto U_{g}%
\overset{\ast}{\vartriangleright}\beta_{h},\nonumber\\
T_{e_{G}}\hat{\sigma}_{\beta_{h}}^{\ast}  &  :\mathfrak{g}\rightarrow
T_{\beta_{h}}T^{\ast}H, \qquad \xi \mapsto \xi\overset{\ast}{\vartriangleright
}\beta_{h}.\nonumber
\end{align}
In particular, for $g=e_{G}$, we arrive at the infinitesimal actions%
\begin{align}
T_{e_{G}}\hat{\sigma}_{\beta_{h}}^{\ast}  &  :\mathfrak{g}\rightarrow
T_{\beta_{h}}T^{\ast}H, \qquad \xi\mapsto\xi\overset{\ast}{\vartriangleright
}\beta_{h},\nonumber\\
T_{e_{G}}\hat{\sigma}_{\nu}^{\ast}  &  :\mathfrak{g}\rightarrow\mathfrak{h}%
^{\ast}, \qquad \xi\mapsto \xi\overset{\ast}{\vartriangleright}\nu.
\label{xionnu}%
\end{align}
As a result, we have the dual actions
\[
\left\langle \nu,\eta\vartriangleleft\xi\right\rangle =\left\langle
\xi\overset{\ast}{\vartriangleright}\nu,\eta\right\rangle.
\]

\subsection{Tangent bundle of a matched pair of Lie groups}

In this subsection we discuss the tangent bundle $T(G\bowtie H)$ of a matched pair
group $G \bowtie H$. We show that it is the matched pair group of $(TG,TH)$.
We also mention briefly the actions of the groups $G$ and $H$ on $T(G\bowtie H)$, and the
corresponding reductions.

Let us first recall from \cite{AlekGrabMarmMich94,MarsRatiWein84} that the tangent bundle $TG$ of a Lie group $G$ is a Lie group with the multiplication
\begin{equation}
\varpi_{TG}\left(  U_{g_{1}},U_{g_{2}}\right)  =TL_{g_{1}}U_{g_{2}}+TR_{g_{2}%
}U_{g_{1}}. \label{GrTG}%
\end{equation}
The group $TG$ can be identified, as a Lie group, with the semi-direct product $G\ltimes\mathfrak{g}$, (resp. $\mathfrak{g}\rtimes G$) via the
left (resp. right) trivialization
\begin{align}
tr_{TG}^{L}  &  :TG\rightarrow G\ltimes\mathfrak{g}, \qquad U_{g}\mapsto
(g,\xi=T_{g}L_{g^{-1}}U_{g}),\label{ltG}\\
(\text{resp. }tr_{TG}^{R}  &  :TG\rightarrow\mathfrak{g}\rtimes G, \qquad U_{g}\mapsto
(\xi=T_{g}R_{g^{-1}}U_{g},g) \label{rtG}).
\end{align}
The trivialization maps, then, induces the Lie group structures given by
\begin{align}
\varpi_{G\ltimes\mathfrak{g}}\left(  (g_{1},\xi_{1}),(g_{2},\xi_{2})\right)
&  =(g_{1}g_{2},\xi_{2}+Ad_{g_{2}^{-1}}\xi_{1}),\nonumber\\
\varpi_{\mathfrak{g}\rtimes G}\left(  (\xi_{1},g_{1}),(\xi_{2},g_{2})\right)
&  =(Ad_{g_{1}}\xi_{2}+\xi_{1},g_{1}g_{2}). \label{Grse}%
\end{align}
Transferring the action of the group $G$ on its tangent bundle $TG$ via the
trivialization maps (\ref{ltG}) and (\ref{rtG}), we obtain the actions
\begin{align}
G\times\left(  G\ltimes\mathfrak{g}\right)   &  \rightarrow G\ltimes
\mathfrak{g},\qquad\left(  g_{1},(g_{2},\xi)\right)  \mapsto(g_{1}%
g_{2},\xi),\nonumber\\
\left(  \mathfrak{g}\rtimes G\right)  \times G  &  \rightarrow\left(
\mathfrak{g}\rtimes G\right),\qquad\left(  (\xi,g_{1}),g_{2}\right)
\mapsto\left(  \xi,g_{1}g_{2}\right), \label{gact}%
\\
\left(  G\ltimes\mathfrak{g}\right)  \times G  &  \rightarrow G\ltimes
\mathfrak{g},\qquad\left(  (g_{1},\xi),g_{2}\right)  =(g_{1}g_{2}%
,Ad_{g_{2}^{-1}}\xi),\nonumber\\
G\times\left(  \mathfrak{g}\rtimes G\right)   &  \rightarrow\left(
\mathfrak{g}\rtimes G\right)  ,\qquad\left(  g_{1},(\xi,g_{2})\right)
\rightarrow\left(  Ad_{g_{1}}\xi_{2},g_{1}g_{2}\right)  . \label{bact}%
\end{align}

We next discuss the matched pair decomposition of the tangent bundle
\[
T\left(  G\bowtie H\right)  =\left\{  \left(  U_{g},V_{h}\right)  \mid
V_{g}\in T_{g}G,\,\,\ V_{h}\in T_{h}H\right\}
\]
of a matched pair group $G\bowtie H$. The group structure on the tangent bundle, is given by
\begin{equation}
\varpi_{T\left(  G\bowtie H\right)  }\left(  \left(  U_{g_{1}},V_{h_{1}%
}\right)  ,\left(  U_{g_{2}},V_{h_{2}}\right)  \right)  =TL_{(g_{1},h_{1}%
)}\left(  U_{g_{2}},V_{h_{2}}\right)  +TR_{(g_{2},h_{2})}\left(  U_{g_{1}%
},V_{h_{1}}\right), \label{TMGr}%
\end{equation}
where the tangent lifts of the left and the right regular actions of $G\bowtie H$ are given by
\begin{align}\nonumber
& T_{\left(  g_{2},h_{2}\right)  }L_{\left(  g_{1},h_{1}\right)  }\left(
U_{g_{2}},V_{h_{2}}\right) = \\\nonumber
& \hspace{2cm}\left(  T_{h_{1}\vartriangleright g_{2}%
}L_{g_{1}}\left(  h_{1}\vartriangleright U_{g_{2}}\right)  ,\,T_{h_{1}%
\vartriangleleft g_{2}}R_{h_{2}}\left(  h_{1}\vartriangleleft U_{g_{2}%
}\right)  +T_{{h_{2}}}L_{\left(  h_{1}\vartriangleleft g_{2}\right)  }%
V_{h_{2}}\right), \label{righttransl}\\
\end{align}
and
\begin{align}\nonumber
& T_{\left(  g_{1},h_{1}\right)  }R_{\left(  g_{2},h_{2}\right)  }\left(
U_{g_{1}},V_{h_{1}}\right)  = \\
&  \hspace{2cm}\left(  T_{g_{1}}R_{\left(  h_{1}%
\vartriangleright g_{2}\right)  }U_{g_{1}}+T_{h_{1}\vartriangleright g_{2}%
}L_{g_{1}}\left(  V_{h_{1}}\vartriangleright g_{2}\right)  ,T_{h_{1}%
\vartriangleleft g_{2}}R_{h_{2}}\left(  V_{h_{1}}\vartriangleleft
g_{2}\right)  \right).  \label{righttransl2}%
\end{align}

We next investigate the matched pair decomposition of the tangent bundle $T(G\bowtie H)$, given a matched pair $(G,H)$ of Lie groups. To this end let us first recall a key ingredient from \cite{Maji90-II}.

\begin{theorem} \label{actthm}\cite[Thm. 3.1]{Maji90-II}.
Let $(\mathfrak{g},\mathfrak{h})$ be a matched pair of Lie algebras, and let $G$ and $H$ be the simply-connected Lie groups of $\mathfrak{g}$ and $\mathfrak{h}$, respectively. Then, there is a unique smooth action of $\mathfrak{h}$ on $G$ such that
\begin{equation}
\eta \rt (gg') = T_gR_{g'}(\eta \rt g) + T_{g'}L_g((\eta\lt g)\rt g'), \qquad \eta \rt e_G = 0.
\end{equation}
for any $\eta \in \mathfrak{h}$, and any $g,g' \in G$. Similarly, there is a unique smooth action of $\mathfrak{g}$ on $H$ such that
\begin{equation}\label{h'h-by-xi}
(h'h) \lt \xi = T_hL_{h'}(h\lt \xi) + T_{h'}R_h(h'\lt (h\rt \xi)), \qquad e_H \lt \xi = 0.
\end{equation}
\end{theorem}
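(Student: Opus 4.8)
The plan is to exhibit both infinitesimal actions as linearizations of a single genuine local group action inside the double Lie group integrating $\mathfrak{g}\bowtie\mathfrak{h}$, and to read off the two cocycle identities from associativity there. The compatibility relations \eqref{LAc1}--\eqref{LAc2} are exactly what force the bracket \eqref{mpla} to satisfy the Jacobi identity, so $\mathfrak{d}:=\mathfrak{g}\bowtie\mathfrak{h}$ is a genuine Lie algebra; by Lie's third theorem it integrates to a simply-connected Lie group $D$. Because $\mathfrak{g}\hookrightarrow\mathfrak{d}\hookleftarrow\mathfrak{h}$ are subalgebra inclusions and $G,H$ are simply-connected, they lift to Lie group homomorphisms $\iota_G\colon G\to D$ and $\iota_H\colon H\to D$; this is the sole use of the simple-connectivity hypothesis.

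First I would check that $\mu\colon G\times H\to D$, $(g,h)\mapsto\iota_G(g)\iota_H(h)$, is a local diffeomorphism at every point. Its differential at $(e_G,e_H)$ is the vector-space identity $\mathfrak{g}\oplus\mathfrak{h}\to\mathfrak{d}$, and from $\mu(g_0g,hh_0)=\iota_G(g_0)\,\mu(g,h)\,\iota_H(h_0)$ one sees that $\mu$ intertwines translations, so being a local diffeomorphism at one point forces it at all points. Hence, for each fixed $g\in G$, the curve $t\mapsto\exp_D(t\eta)\iota_G(g)$ through $\iota_G(g)=\mu(g,e_H)$ admits, for small $t$, a unique smooth factorization
\[
\exp_D(t\eta)\,\iota_G(g)=\iota_G(g_t)\,\iota_H(h_t),\qquad (g_t,h_t)\ \text{near}\ (g,e_H),
\]
and I define $\eta\vartriangleright g:=\tfrac{d}{dt}\big|_{0}g_t\in T_gG$ and $\eta\vartriangleleft g:=\tfrac{d}{dt}\big|_{0}h_t\in\mathfrak{h}$. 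Smoothness and globalness in $g$ are immediate from the local-diffeomorphism property, even though the global decomposition $D=\iota_G(G)\iota_H(H)$ may fail.

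Writing the factorization as $\iota_H(h)\iota_G(g)=\iota_G(h\vartriangleright g)\iota_H(h\vartriangleleft g)$ (valid for $h$ near $e_H$ and all $g$), associativity gives $(h_1h_2)\vartriangleright g=h_1\vartriangleright(h_2\vartriangleright g)$, so $\eta\mapsto(\eta\vartriangleright\cdot)$ is a bona fide infinitesimal $\mathfrak{h}$-action on $G$. The stated identity now falls out of expanding $\exp_D(t\eta)\iota_G(gg')$ in two ways: factor $\exp_D(t\eta)\iota_G(g)=\iota_G(g_t)\iota_H(h_t)$, refactor $\iota_H(h_t)\iota_G(g')=\iota_G(h_t\vartriangleright g')\iota_H(h_t\vartriangleleft g')$, read off the $G$-part as $g_t\,(h_t\vartriangleright g')$, and differentiate at $t=0$ by the Leibniz rule; the two terms produced are precisely $T_gR_{g'}(\eta\vartriangleright g)$ and $T_{g'}L_g((\eta\vartriangleleft g)\vartriangleright g')$, while $\eta\vartriangleright e_G=0$ since $\exp_D(t\eta)=\iota_H(\exp_H(t\eta))$ forces $g_t\equiv e_G$. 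Uniqueness is then cheap: differentiating the identity along $g'=\exp_G(s\xi)$ converts it into a first-order linear ODE for the $TG$-valued map $g\mapsto\eta\vartriangleright g$ with vanishing initial value, which, $G$ being connected, has a unique solution.

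The mirror statement \eqref{h'h-by-xi} comes from the symmetric factorization $\iota_H(h)\exp_D(s\xi)=\iota_G(\tilde g_s)\iota_H(\tilde h_s)$, setting $h\vartriangleleft\xi:=\tfrac{d}{ds}\big|_0\tilde h_s$ and $h\vartriangleright\xi:=\tfrac{d}{ds}\big|_0\tilde g_s$, with left/right and $G/H$ interchanged throughout. I expect the main obstacle to be the globalization step---upgrading the local factorization to a smooth vector field $g\mapsto\eta\vartriangleright g$ defined on all of $G$; everything rests on $\mu$ being a local diffeomorphism at every point, which is what lets the small-$t$ factorization be carried out uniformly in $g$ without any global decomposition of $D$.
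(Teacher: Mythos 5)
The paper itself offers no proof of this theorem --- it is quoted verbatim from \cite[Thm.\ 3.1]{Maji90-II} --- so the benchmark is Majid's original argument, which your construction essentially reproduces: pass to the double Lie algebra $\mathfrak{d}=\mathfrak{g}\bowtie\mathfrak{h}$, integrate to a simply connected $D$ by Lie III, lift the subalgebra inclusions to homomorphisms $\iota_G,\iota_H$, and exploit the fact that $\mu(g,h)=\iota_G(g)\iota_H(h)$ is a local diffeomorphism everywhere. Your translation-intertwining proof of that last fact, your definition of $\eta\rt g$ and $\eta\lt g$ by differentiating the local factorization, and your Leibniz-rule derivation of the two product identities are all correct; the existence half of the proposal is sound.

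The genuine gap is in the uniqueness argument, and it is not cosmetic. Differentiating the identity along $g'=\exp_G(s\xi)$ does \emph{not} yield a closed ODE for $g\mapsto\eta\rt g$: the differentiated identity is
\begin{equation*}
\left.\frac{d}{ds}\right|_{0}\eta\rt\left(g\exp_G(s\xi)\right)
=\left.\frac{d}{ds}\right|_{0}T_gR_{\exp_G(s\xi)}(\eta\rt g)
+T_{e_G}L_g\bigl((\eta\lt g)\rt\xi\bigr),
\end{equation*}
whose last term involves the second unknown $\eta\lt g$. Uniqueness can only be run for the coupled pair $(\eta\rt g,\eta\lt g)$, with initial value $(0,\eta)$ --- not ``vanishing'' --- and this needs a companion differential identity for $g\mapsto\eta\lt g$ (namely $\eta\lt(gg')=(\eta\lt g)\lt g'$, differentiated), which is not among the stated equations and which you neither derive nor invoke. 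More seriously, uniqueness is outright false as you have framed it: the trivial pair $\eta\rt g:=0$, $\eta\lt g:=\eta$ satisfies the displayed identity and $\eta\rt e_G=0$ for \emph{any} matched pair, so the equations alone cannot single out your action. What pins the action down is the normalization that its linearization at $e_G$ reproduce the given Lie-algebra data, i.e.\ $\frac{d}{ds}\big|_0\,\eta\rt\exp_G(s\xi)=\eta\rt\xi$ and $\frac{d}{ds}\big|_0\,\eta\lt\exp_G(s\xi)=\eta\lt\xi$; this is what makes the coefficients of the coupled linear system the known matched-pair maps, so that two solutions with the same initial data agree on the connected group $G$. It is also a property of your construction that you never verify --- it does hold, by left-translating the relation $T\iota_G(\eta\rt g)+TL_{\iota_G(g)}(\eta\lt g)=TR_{\iota_G(g)}\eta$ back to the identity and comparing $\mathfrak{g}\oplus\mathfrak{h}$-components of $\Ad_{\exp_D(-s\xi)}\eta$ at first order in $s$ --- but without this step the constructed action is not even tied to the given matched pair structure, and the uniqueness claim has no content. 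With that verification added and uniqueness restated for the coupled system relative to this normalization, your proof closes.
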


We are now ready to prove the main result of the present subsection.

\begin{proposition}
Given a matched pair $(G,H)$ of Lie groups, $(TG,TH)$ is a matched pair of Lie groups, and $T\left(G\bowtie H\right) \simeq TG \bowtie TH$.
\end{proposition}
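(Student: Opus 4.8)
The plan is to obtain the statement as a direct application of the functorial criterion in Proposition~\ref{Majid-prop}, thereby avoiding a hands-on verification of the compatibility conditions \eqref{condmp} for the pair $(TG,TH)$. The one structural input is that $T(G\bowtie H)$ is itself a Lie group: being the tangent bundle of the Lie group $G\bowtie H$, it carries the multiplication $T\varpi_{G\bowtie H}$ recorded in \eqref{TMGr}, with regular translations \eqref{righttransl} and \eqref{righttransl2}. It therefore suffices to check the hypotheses of Proposition~\ref{Majid-prop} with $M=T(G\bowtie H)$ and the two candidate subgroups $TG$ and $TH$.

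First I would realize $TG$ and $TH$ as Lie subgroups of $T(G\bowtie H)$. Applying the tangent functor to the inclusions $\iota_G\colon G\hookrightarrow G\bowtie H$ and $\iota_H\colon H\hookrightarrow G\bowtie H$ gives maps $T\iota_G\colon TG\to T(G\bowtie H)$ and $T\iota_H\colon TH\to T(G\bowtie H)$. Since $T$ carries a group homomorphism to a homomorphism of the associated tangent groups (with multiplication \eqref{GrTG}) and an embedding to an embedding, these are subgroup inclusions.

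Next I would produce the bijective multiplication. A short computation with $\varpi_{G\bowtie H}$ shows that $\mu=\varpi_{G\bowtie H}\circ(\iota_G\times\iota_H)\colon G\times H\to G\bowtie H$ satisfies $\mu(g,h)=(g,e_H)(e_G,h)=(g,h)$, i.e. $\mu$ is the identity of the underlying manifold $G\times H$ and hence a diffeomorphism. Applying $T$ and using $T(G\times H)\cong TG\times TH$, the multiplication map $TG\times TH\to T(G\bowtie H)$, $(U_g,V_h)\mapsto T\iota_G(U_g)\cdot T\iota_H(V_h)$, is exactly $T\mu$ and is therefore a bijection. Proposition~\ref{Majid-prop} then applies verbatim and yields at once that $(TG,TH)$ is a matched pair and that $T(G\bowtie H)\cong TG\bowtie TH$.

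I do not anticipate a real obstacle for the isomorphism itself: the argument is essentially functorial bookkeeping, the only point needing care being that the tangent functor genuinely sends $\varpi_{G\bowtie H}$ to the multiplication \eqref{TMGr} and the inclusions to subgroup inclusions, so that $M$, $TG$, $TH$ meet the exact hypotheses of Proposition~\ref{Majid-prop}; note also that this route needs no simple-connectedness assumption. The more substantive follow-up, which I would carry out afterwards for concreteness, is to identify the mutual actions that Proposition~\ref{Majid-prop} produces through \eqref{aux-mutual-actions-group}: writing the product $V_h\cdot U_g$ in $T(G\bowtie H)$ as $(V_h\vartriangleright U_g)(V_h\vartriangleleft U_g)$ with $V_h\vartriangleright U_g\in TG$ and $V_h\vartriangleleft U_g\in TH$, I would verify that these coincide with the tangent lifts $\tilde\rho$ and $\hat\sigma$ constructed in Subsection~\ref{tangent-lifts}, using the explicit formulas \eqref{righttransl}--\eqref{righttransl2}; here Theorem~\ref{actthm} is what guarantees that the $\mathfrak{h}$-action on $G$ underlying this $TH$-action on $TG$ is a genuine smooth action.
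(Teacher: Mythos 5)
Your proposal is correct and follows essentially the same route as the paper: both invoke Proposition~\ref{Majid-prop} after exhibiting $TG$ and $TH$ as subgroups of $T(G\bowtie H)$ (your $T\iota_G$, $T\iota_H$ are exactly the paper's inclusions $U_g\mapsto(U_g,0)$, $V_h\mapsto(0,V_h)$) and checking that the multiplication $TG\times TH\to T(G\bowtie H)$ is a bijection. The only cosmetic difference is that you obtain the bijection functorially as $T\mu$ with $\mu=\mathrm{id}$, whereas the paper computes $(U_g,0)\cdot(0,V_h)=(U_g,V_h)$ directly from \eqref{TMGr}, \eqref{righttransl} and \eqref{righttransl2}; the identification of the mutual actions (where the paper, too, uses Theorem~\ref{actthm}) is carried out after the proof, just as you propose.
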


\begin{proof}
We shall use the proposition \ref{Majid-prop} to show that $T(G\bowtie H) \simeq TG \bowtie TH$. That $TG \hookrightarrow T(G\bowtie H) \hookleftarrow TH$, that is $TG$ and $TH$ are subgroups of $T(G\bowtie H)$ follows easily from (\ref{righttransl}) and (\ref{righttransl2}) by
\[
U_g \hookrightarrow (U_g,0), \qquad (0,V_h) \hookleftarrow V_h.
\]
On the next step, we note via (\ref{TMGr}) that the (group) multiplication
\[
(U_g,V_h) \mapsto U_g\cdot V_h = (U_g,0)\cdot (0,V_h)
\]
where
\[
(U_g,0)\cdot (0,V_h) = T_{(e_G,h)}L_{(g,e_H)}(0,V_h) + T_{(g,e_H)}R_{(e_G,h)}(U_g,0) = (U_g,V_h)
\]
is indeed a bijection.
\end{proof}

In terms of trivialized tangent bundles, the explicit isomorphism is given by
\begin{align}\nonumber
& (G\ltimes \mathfrak{g})\bowtie(H\bowtie \mathfrak{h}) \rightarrow (G\bowtie H) \ltimes (\mathfrak{g}\bowtie \mathfrak{h}), \\
& ((g,\xi),(h,\eta)) \mapsto ((g,h),(h^{-1}\rt \xi, T_{h^{-1}}R_h(h^{-1}\lt \xi)+\eta)),
\end{align}
together with the inverse
\begin{align}\nonumber
& (G\bowtie H) \ltimes (\mathfrak{g}\bowtie \mathfrak{h}) \rightarrow (G\ltimes \mathfrak{g})\bowtie(H\bowtie \mathfrak{h}), \qquad((g,h),(\xi,\eta)) \mapsto \\
&((g,h\rt \xi),(h,\eta-T_{h^{-1}}R_h(h^{-1}\lt (h\rt \xi)))) = ((g,h\rt \xi),(h,\eta+T_{h}L_{h^{-1}}(h\lt \xi))),
\end{align}
where the latter equality follows from the theorem \eqref{actthm}, more precisely, taking $h'=h^{-1}$ in \eqref{h'h-by-xi}. The inclusions $G\ltimes \mathfrak{g} \hookrightarrow (G\bowtie H) \ltimes (\mathfrak{g}\bowtie \mathfrak{h}) \hookleftarrow H\ltimes \mathfrak{h}$ correspond to
\[
(g,\xi) \hookrightarrow ((g,e_H),(\xi,0)), \qquad ((e_G,h),(0,\eta)) \hookleftarrow (h,\eta),
\]
and thus it follows from
\begin{align*}
& (h,\eta) \cdot (g,\xi) = ((e_G,h),(0,\eta)) \cdot ((g,e_H),(\xi,0))  \\
& =((h\rt g,h\lt g),(\xi,0) + Ad_{(g,e_H)^{-1}}(0,\eta)) \\&= ((h\rt g,h\lt g),(\xi + T_gL_{g^{-1}}(\eta \rt g), \eta \lt g)) \\&=
((h\rt g, e_H),((h\lt g)\rt \xi + (h\lt g)\rt T_gL_{g^{-1}}(\eta \rt g), 0)) \cdot \\
&((e_G,h\lt g),(0,\eta\lt g - T_{(h\lt g)^{-1}}R_{(h\lt g)}((h\lt g)^{-1} \lt ((h\lt g) \rt (\xi + T_gL_{g^{-1}}(\eta\rt g))))))
\end{align*}
that the mutual actions of the trivialized tangent bundles are given by
\begin{align*}
& (h,\eta) \rt (g,\xi) = (h\rt g, (h\lt g)\rt \xi + (h\lt g)\rt T_gL_{g^{-1}}(\eta \rt g)), \\
& (h,\eta) \lt (g,\xi) \\ &= (h\lt g, \eta\lt g - T_{(h\lt g)^{-1}}R_{(h\lt g)}((h\lt g)^{-1} \lt ((h\lt g) \rt (\xi + T_gL_{g^{-1}}(\eta\rt g))))) \\
& = (h\lt g, \eta\lt g + T_{(h\lt g)}L_{(h\lt g)^{-1}}((h\lt g) \lt (\xi + T_gL_{g^{-1}}(\eta\rt g)))),
\end{align*}
where, once again, the latter equality is a result of \eqref{h'h-by-xi}.

Finally we shall discuss briefly the reductions of $T(G\bowtie H)$ to $\mathfrak{g} \bowtie \mathfrak{h}$. To this end, we next consider the left and right actions of $G$ and $H$ on the tangent bundle
$T\left(  G\bowtie H\right)$, by lifting the actions (\ref{actions}) of $G$ and $H$ on
$G\bowtie H$. Accordingly,
\begin{eqnarray}
&&G\times T\left(  G\bowtie H\right)   \rightarrow T\left(  G\bowtie
H\right)  ,\quad\left(  g_{1},\left(  U_{g_{2}},V_{h_{2}}\right)  \right)
\mapsto\left(  T_{g_{2}}L_{g_{1}}U_{g_{2}},V_{h_{2}}\right)  ,\nonumber\\&&
T\left(  G\bowtie H\right)  \times G   \rightarrow T\left(  G\bowtie
H\right)  ,\quad\left(  \left(  U_{g_{1}},V_{h_{1}}\right)  ,g_{2}\right)
\mapsto\left(  T_{g_{1}}R_{\left(  h_{1}\vartriangleright g_{2}\right)
}U_{g_{1}},V_{h_{1}}\vartriangleleft g_{2}\right)  ,\nonumber\\&&
H\times T\left(  G\bowtie H\right)    \rightarrow T\left(  G\bowtie
H\right)  ,\nonumber\\&& \qquad
 \left(  h_{1},\left(  U_{g_{2}},V_{h_{2}}\right)  \right)
\mapsto\left(  h_{1}\vartriangleright U_{g_{2}},TR_{h_{2}}\left(
h_{1}\vartriangleleft U_{g_{2}}\right)  h_{2}+TL_{\left(  h_{1}%
\vartriangleleft g_{2}\right)  }V_{h_{2}}\right)  , \nonumber\\&&
T\left(  G\bowtie H\right)  \times H   \rightarrow T\left(  G\bowtie
H\right)  ,\quad \left(  \left(  U_{g_{1}},V_{h}\right)  ,h_{2}\right)
\mapsto\left(  U_{g_{1}},TR_{h_{2}}V_{h}\right)  . \label{actT}%
\end{eqnarray}
On the level of the trivialized tangent bundles, the left action of $G$ appears as
\begin{align*}
& G\times\left(  \left(  G\bowtie H\right)  \ltimes\left(  \mathfrak{g}%
\bowtie\mathfrak{h}\right)  \right)  \rightarrow\left(  G\bowtie H\right)
\ltimes\left(  \mathfrak{g}\bowtie\mathfrak{h}\right),\\
& \left(g_{1},\left( ( g_{2},h_{2}),(\xi_{2},\eta_{2})\right)  \right)  \mapsto\left(
(g_{1}g_{2},h_{2}),(\xi_{2},\eta_{2})\right)  .
\end{align*}
The space of orbits of this action is given by
\begin{equation}
G\backslash\left(  \left(  G\bowtie H\right)  \ltimes\left(  \mathfrak{g}%
\bowtie\mathfrak{h}\right)  \right)  \simeq H\ltimes\left(  \mathfrak{g}%
\bowtie\mathfrak{h}\right).
\label{QuoG}%
\end{equation}
The group $H$ being a subgroup of $H\ltimes\left(  \mathfrak{g}\bowtie\mathfrak{h}\right)$, we have also the left action
\begin{equation}
H\times\left(  H\ltimes\left(  \mathfrak{g}\bowtie\mathfrak{h}\right)
\right)  \rightarrow H\ltimes\left(  \mathfrak{g}\bowtie\mathfrak{h}\right)
,\qquad\left(  h_{1},((h_{2},\xi_{2}),\eta_{2})\right)  \mapsto \left(
h_{1}h_{2},(\xi_{2},\eta_{2})\right)  . \label{H2}%
\end{equation}
Then the space of orbits of the action (\ref{H2}) is isomorphic to the matched pair Lie
algebra $\mathfrak{g}\bowtie\mathfrak{h}$.

On the other hand, we could arrive at $\mathfrak{g}\bowtie
\mathfrak{h}$ via the reduction of $T\left(  G\bowtie H\right)$
by the right action of $H$ first, and then the right action of $G$. Furthermore, the left (resp. right) action of the matched pair group $G \bowtie H$ on the left (resp. right) trivialization of $T(G\bowtie H)$ would also yield the matched pair Lie algebra $\mathfrak{g}\bowtie \mathfrak{h}$. Finally, we summarize the reduction stages by the following diagram:
\begin{equation}{\label{LagRedT}}
\xymatrix{
\left(  G\bowtie H\right)  \ltimes\left(  \mathfrak{g}\bowtie\mathfrak{h}\right)
\ar[dd]_{\txt{Reduction by G\\from the left}}
\ar[ddrr]
&& T(G\bowtie H) \ar[rr]^{tr_{T\left(  G\bowtie H\right)  }^{R}}
\ar[ll]_{tr_{T\left(  G\bowtie H\right)  }^{L}} \ar[dd] && \left(  \mathfrak{g}\bowtie\mathfrak{h}\right)  \rtimes\left(  G\bowtie
H\right) \ar[dd]^{\txt{Reduction by H\\from the right}} \ar[ddll]
\\\\ \mathfrak{g}\bowtie(H\ltimes\mathfrak{h}) \ar[rr]_{\txt{Reduction by H\\from the left}}
 &&\mathfrak{g}\bowtie\mathfrak{h}
 &&
(\mathfrak{g}\rtimes G)\bowtie\mathfrak{h} \ar[ll]^{\txt{Reduction by G\\from the right}}}
\end{equation}

\section{Lagrangian Dynamics}

\subsection{Lagrangian dynamics on Lie Groups}

Let $G$ be a Lie group, and $\mathfrak{L}$ a Lagrangian density on the tangent
bundle $TG$. Let us then define on $G\ltimes\mathfrak{g}$,
\begin{equation}
\mathfrak{L}\left(  g,\xi\right)  =\mathfrak{L}\circ tr_{TG}^{L}\left(
U_{g}\right)  =\mathfrak{L}\left(  U_{g}\right)  , \label{Lag}%
\end{equation}
where $\xi=T_{g}L_{g^{-1}}U_{g}$. In order to determine the extremum
of the action integral $\int\mathfrak{L}\left(  g,\xi\right)  dt$ over the
paths joining two fixed points $g\left(  a\right)  $ and $g\left(  b\right)
$, we compute the variation of the action integral%
\begin{align}
&  \delta\int_{b}^{a}\mathfrak{L}\left(  g,\xi\right)  dt=\int_{b}^{a}\left(
\left\langle \frac{\delta\mathfrak{L}}{\delta g},\delta g\right\rangle
_{g}+\left\langle \frac{\delta\mathfrak{L}}{\delta\xi},\delta\xi\right\rangle
_{e}\right)  dt\nonumber\\
&  =\int_{b}^{a}\left(  \left\langle \frac{\delta\mathfrak{L}}{\delta
g},\delta g\right\rangle _{g}+\left\langle \frac{\delta\mathfrak{L}}{\delta
\xi},\dot{\eta}+\left[  \xi,\eta\right]  \right\rangle _{e}\right)
dt\nonumber\\
&  =-\left.  \left\langle \frac{\delta\mathfrak{L}}{\delta\xi},\eta
\right\rangle _{e}\right\vert _{b}^{a}+\text{ }\int_{b}^{a}\left(
\left\langle \frac{\delta\mathfrak{L}}{\delta g},\delta g\right\rangle
_{g}+\left\langle -\frac{d}{dt}\frac{\delta\mathfrak{L}}{\delta\xi}+ad_{\xi
}^{\ast}\frac{\delta\mathfrak{L}}{\delta\xi},\eta\right\rangle _{e}\right)
dt\nonumber\\
&  =-\left.  \left\langle \frac{\delta\mathfrak{L}}{\delta\xi},T_{g}L_{g^{-1}%
}\delta g\right\rangle _{e}\right\vert _{b}^{a}+\int_{b}^{a}\left\langle
\frac{\delta\mathfrak{L}}{\delta g},\delta g\right\rangle _{g}+\left\langle
ad_{\xi}^{\ast}\frac{\delta\mathfrak{L}}{\delta\xi}-\frac{d}{dt}\frac
{\delta\mathfrak{L}}{\delta\xi},T_{g}L_{g^{-1}}\delta g\right\rangle
_{e}dt\nonumber\\
&  =-\left.  \left\langle T_{g}^{\ast}L_{g^{-1}}\frac{\delta\mathfrak{L}%
}{\delta\xi},\delta g\right\rangle _{g}\right\vert _{a}^{b}+\int_{b}%
^{a}\left\langle \frac{\delta\mathfrak{L}}{\delta g}+T_{g}^{\ast}L_{g^{-1}%
}\left(  ad_{\xi}^{\ast}\frac{\delta\mathfrak{L}}{\delta\xi}-\frac{d}{dt}%
\frac{\delta\mathfrak{L}}{\delta\xi}\right)  ,\delta g\right\rangle _{g}dt.
\label{vars}%
\end{align}
If $\delta g$ vanishes at boundaries, we arrive at the trivialized
Euler-Lagrange dynamics
\begin{equation}
\frac{d}{dt}\frac{\delta\mathfrak{L}}{\delta\xi}=T_{e}^{\ast}L_{g}\frac
{\delta\mathfrak{L}}{\delta g}-ad_{\xi}^{\ast}\frac{\delta\mathfrak{L}}%
{\delta\xi}, \label{preeulerlagrange}%
\end{equation}
see, for instance, \cite{bou2009hamilton, colombo2013optimal,ColoMart14,
engo2003partitioned, esen2015reductions, esen2015tulczyjew}.
The variation with respect to the fiber (Lie algebra) variable $\xi$ is
achieved by the reduced variational principle $\delta\xi=\dot{\eta}+\left[
\xi,\eta\right]  $, see, for example, \cite{cendra2003variational,
esen2015tulczyjew, MarsdenRatiu-book, holm2009geometric}. On the other hand, if the Lagrangian
$\mathfrak{L}$ is independent of the group variable, that is $\mathfrak{L}\left(  g,\xi\right)
=l\left(  \xi\right)$, then the term involving the
derivatives with respect to the group variable drops, and
(\ref{preeulerlagrange}) reduces to the Euler-Poincar\'{e} equations
\begin{equation}
\frac{d}{dt}\frac{\delta l}{\delta\xi}=-ad_{\xi}^{\ast}\frac{\delta l}%
{\delta\xi}. \label{Euler-Poincare}%
\end{equation}
We finally note that, here $ad^{\ast}$ denotes
the (left) infinitesimal coadjoint action, rather than the linear algebraic dual of
infinitesimal adjoint action $ad$.

\subsection{Lagrangian dynamics on matched pairs}\label{lagrangian-dynamics-matched-pairs}

In this subsection we develop the explicit Euler-Lagrange and Euler-Poincar\'{e}
equations on the tangent bundle $T(G\bowtie H)$ of matched pair of groups $G\bowtie H$.

Since the trivialized Euler-Lagrange equations (\ref{preeulerlagrange})
include the cotangent lift $T^{\ast}L$ of the left translation, as well as the
infinitesimal coadjoint action $ad^{\ast}$, we start by deriving those
from (\ref{righttransl}) and (\ref{mpla}), respectively. The cotangent lift of the left trivialization then,
dualizing (\ref{righttransl}), is given by
\begin{align}\nonumber
& T_{\left(  g_{2},h_{2}\right)  }^{\ast}L_{\left(  g_{1},h_{1}\right)  }\left(
\alpha_{g_{1}(h_1 \rt g_{2})},\beta_{(h_1 \lt g_{2})h_{2}}\right) = \\
& \hspace{.7cm} (\left(  T_{h_{1}%
\vartriangleright g_{2}}^{\ast}L_{g_{1}}\alpha_{g_{1}(h_1 \rt g_{2})}\right)
\overset{\ast}{\vartriangleleft}h_{1}+T_{g_{2}}^{\ast}\left(  R_{h_{2}}%
\circ\sigma_{h_{1}}\right)  \beta_{(h_1 \lt g_{2})h_{2}},T_{h_{2}}^{\ast}L_{h_{1}%
\vartriangleleft g_{2}}\beta_{(h_1 \lt g_{2})h_{2}}) \label{T*Lmp}%
\end{align}
for any $g_1,g_2 \in G$, $h_1,h_2 \in H$, $\a_{g_1(h_1 \rt g_{2})} \in T^\ast_{g_1(h_1 \rt g_{2})}G$, and
any $\b_{(h_1 \lt g_{2})h_2} \in T^\ast_{(h_1 \lt g_{2})h_2}H$. In particular,
for $\left(  g_{2},h_{2}\right) = \left(e_{G},e_{H}\right)$, the mapping (\ref{T*Lmp}) reduces to%
\begin{equation}
T_{\left(  e_{G},e_{H}\right)  }^{\ast}L_{\left(  g,h\right)  }\left(
\alpha_{g},\beta_{h}\right)  =\left(  \left(  T_{e_{G}}^{\ast}L_{g_{1}}%
\alpha_{g}\right)  \overset{\ast}{\vartriangleleft}h+T_{e_{G}}^{\ast}%
\sigma_{h}\beta_{h},T_{e_{H}}^{\ast}L_{h}\beta_{h}\right)  . \label{T*Lmp-e}%
\end{equation}
On the other hand, it follows from
\[
\left\langle ad_{\left(  \xi_{1},\eta_{1}\right)  }^{\ast}\left(  \mu
,\nu\right)  ,\left(  \xi_{2},\eta_{2}\right)  \right\rangle =-\left\langle
\left(  \mu,\nu\right)  ,\left[  \left(  \xi_{1},\eta_{1}\right)  ,\left(
\xi_{2},\eta_{2}\right)  \right]  \right\rangle
\]
and the formula (\ref{mpla}) that
\begin{equation}
ad_{\left(  \xi,\eta\right)  }^{\ast}\left(  \mu,\nu\right)  =\left(  ad_{\xi
}^{\ast}\mu-\mu\overset{\ast}{\vartriangleleft}\eta-\mathfrak{a}_{\eta}^{\ast
}\nu,ad_{\eta}^{\ast}\nu+\xi\overset{\ast}{\vartriangleright}\nu
+\mathfrak{b}_{\xi}^{\ast}\mu\right), \label{coad}%
\end{equation}
for any $\xi \in \mathfrak{g}$, $\eta \in \mathfrak{h}$, $\mu \in \mathfrak{g}^\ast$,
and any $\nu \in \mathfrak{h}^\ast$. On the right hand side,
the mapping $\mathfrak{a}^\ast_\eta:\mathfrak{h}^\ast \to \mathfrak{g}^\ast$ is the transpose of \eqref{infadgonh}, and $\mathfrak{b}^\ast_\xi:\mathfrak{g}^\ast \to \mathfrak{h}^\ast$ is the transpose of \eqref{etaonxi}.

As a result, given a Lagrangian functional $\mathfrak{L}=\mathfrak{L}\left(  g,h,\xi,\eta\right)$
on the trivialized tangent bundle $(G\bowtie H)\ltimes(\mathfrak{g}\bowtie\mathfrak{h})$, the matched Euler-Lagrange equations
\begin{equation}
\frac{d}{dt}\left(  \frac{\delta\mathfrak{L}}{\delta\xi},\frac{\delta
\mathfrak{L}}{\delta\eta}\right)  =T_{\left(  e_{G},e_{H}\right)  }^{\ast
}L_{\left(  g,h\right)  }\left(  \frac{\delta\mathfrak{L}}{\delta g}%
,\frac{\delta\mathfrak{L}}{\delta h}\right)  -ad_{\left(  \xi,\eta\right)
}^{\ast}\left(  \frac{\delta\mathfrak{L}}{\delta\xi},\frac{\delta\mathfrak{L}%
}{\delta\eta}\right)  \label{mpELcomp}%
\end{equation}
are given, in view of (\ref{T*Lmp-e}) and (\ref{coad}), by
\begin{align}
\frac{d}{dt}\frac{\delta\mathfrak{L}}{\delta\xi}  &  =T_{e_{G}}^{\ast}%
L_{g}\left(  \frac{\delta\mathfrak{L}}{\delta g}\right)  \overset{\ast
}{\vartriangleleft}h+T_{e_{G}}^{\ast}\sigma_{h}\left(  \frac{\delta
\mathfrak{L}}{\delta h}\right)  -ad_{\xi}^{\ast}\frac{\delta\mathfrak{L}%
}{\delta\xi}+\frac{\delta\mathfrak{L}}{\delta\xi}\overset{\ast}%
{\vartriangleleft}\eta+\mathfrak{a}_{\eta}^{\ast}\frac{\delta\mathfrak{L}%
}{\delta\eta},\label{mEL}\\
\frac{d}{dt}\frac{\delta\mathfrak{L}}{\delta\eta}  &  =T_{e_{H}}^{\ast}%
L_{h}\left(  \frac{\delta\mathfrak{L}}{\delta h}\right)  -ad_{\eta}^{\ast
}\frac{\delta\mathfrak{L}}{\delta\eta}-\xi\overset{\ast}{\vartriangleright
}\frac{\delta\mathfrak{L}}{\delta\eta}-\mathfrak{b}_{\xi}^{\ast}\frac
{\delta\mathfrak{L}}{\delta\xi}\label{mEL-II}.
\end{align}

Alternatively, for the action integral $\int_{a}^{b}\mathfrak{L}\left(
g,h,\xi,\eta\right)  dt$, let $\delta\left(
g,h,\xi,\eta\right) = \left(  \delta g,\delta h,\delta\xi,\delta\eta\right)$, where for any
$\left(\xi_{2},\eta_{2}\right)  \in\mathfrak{g}\bowtie\mathfrak{h}$,
\[
\left(  \delta g,\delta h\right)  =TL_{\left(  g,h\right)  }\left(  \xi
_{2},\eta_{2}\right)  \hbox{ \ \ and \ \ }\left(  \delta\xi,\delta\eta\right)
=\frac{d}{dt}\left(  \xi_{2},\eta_{2}\right)  +[(\xi,\eta),\,(\xi_{2},\eta
_{2})],
\]
or more explicitly,
\begin{align}
\delta g  &  =T_{e_{G}}\left(  L_{g}\circ\rho_{h}\right)  \left(  \xi
_{2}\right)  ,\hbox{
\ \ }\delta h=T_{_{e_{G}}}\sigma_{h}\left(  \xi_{2}\right)  +T_{e_{H}}%
L_{h}\left(  \eta_{2}\right)  ,\label{var}\\
\delta\xi &  =\dot{\xi}_{2}+[\xi,\xi_{2}]+\eta\vartriangleright\xi_{2}%
-\eta_{2}\vartriangleright\xi,\hbox{
\ \ }\delta\eta=\dot{\eta}_{2}+[\eta,\eta_{2}]+\eta\vartriangleleft\xi
_{2}-\eta_{2}\vartriangleleft\xi.\nonumber
\end{align}
Substituting the variations in (\ref{var}) into the variation of the action integral
we obtain
\begin{align*}
& \delta\int_{a}^{b}\mathfrak{L}\left(  g,h,\xi,\eta\right)  dt \\
& \hspace{1cm}=\int_{a}%
^{b}\left(  \left\langle \frac{\delta\mathfrak{L}}{\delta g},\delta
g\right\rangle _{g}+\left\langle \frac{\delta\mathfrak{L}}{\delta h},\delta
h\right\rangle _{h}+\left\langle \frac{\delta\mathfrak{L}}{\delta\xi}%
,\delta\xi\right\rangle _{e}dt+\left\langle \frac{\delta\mathfrak{L}}%
{\delta\eta},\delta\eta\right\rangle _{e}\right)  dt. \nonumber%
\end{align*}
Then, a straightforward calculation similar to
(\ref{vars}) renders the matched Euler-Lagrange equations (\ref{mEL}) and (\ref{mEL-II}).

\begin{remark}{\rm
We note that, the matched Euler-Lagrange equations (\ref{mEL}) and (\ref{mEL-II})
are not preserved when the group variables of the Lagrangian are interchanged. The first reason of that is
the trivialization we choose, see for instance (\ref{var}). The
second reason is the direction of the actions, that $H$ acts on $G$ from the left, and $G$ acts
on $H$ from the right. In other words, if the actions are changed then so does the structure.
}\end{remark}

As summarized in the diagram (\ref{LagRedT}), there are several possible reductions of the
matched Euler-Lagrange equations (\ref{mEL}) and (\ref{mEL-II}). In case the Lagrangian
is independent of the group variable $g \in G$, in other words if the Lagrangian
$\mathfrak{L}=\mathfrak{L}\left(g, h,\xi,\eta\right)  $ is invariant under
the left action of $G$ on $\left(  G\ltimes H\right)  \ltimes\left(
\mathfrak{g\ltimes h}\right)$, then the equations (\ref{mEL}) and (\ref{mEL-II}) reduce to
\begin{align}
\frac{d}{dt}\frac{\delta\mathfrak{L}}{\delta\xi}  &  =T_{e_{G}}^{\ast}%
\sigma_{h}\left(  \frac{\delta\mathfrak{L}}{\delta h}\right)  -ad_{\xi}^{\ast
}\frac{\delta\mathfrak{L}}{\delta\xi}+\frac{\delta\mathfrak{L}}{\delta\xi
}\overset{\ast}{\vartriangleleft}\eta+\mathfrak{a}_{\eta}^{\ast}\frac
{\delta\mathfrak{L}}{\delta\eta},\label{Red2}\\
\frac{d}{dt}\frac{\delta\mathfrak{L}}{\delta\eta}  &  =T_{e_{H}}^{\ast}%
L_{h}\left(  \frac{\delta\mathfrak{L}}{\delta h}\right)  -ad_{\eta}^{\ast
}\frac{\delta\mathfrak{L}}{\delta\eta}-\xi\overset{\ast}{\vartriangleright
}\frac{\delta\mathfrak{L}}{\delta\eta}-\mathfrak{b}_{\xi}^{\ast}\frac
{\delta\mathfrak{L}}{\delta\xi}, \label{Red1}
\end{align}
encoding the Lagrangian dynamics on $H\ltimes\left(  \mathfrak{g\ltimes h}\right)$ for the reduced Lagrangian $\mathfrak{L}=\mathfrak{L}\left(h,\xi,\eta\right)$.

If furthermore the Lagrangian $\mathfrak{L}=\mathfrak{L}\left(  h,\xi,\eta\right)$ on
$H\ltimes\left(  \mathfrak{g\ltimes h}\right)$ is independent of the group variable $h \in H$, the equations (\ref{Red2} - \ref{Red1}) reduce to the matched Euler-Poincar\'{e}
equations
\begin{align}
\frac{d}{dt}\frac{\delta\mathfrak{L}}{\delta\xi}  &  =-ad_{\xi}^{\ast}%
\frac{\delta\mathfrak{L}}{\delta\xi}+\frac{\delta\mathfrak{L}}{\delta\xi
}\overset{\ast}{\vartriangleleft}\eta+\mathfrak{a}_{\eta}^{\ast}\frac
{\delta\mathfrak{L}}{\delta\eta},\label{mEP2}\\
\frac{d}{dt}\frac{\delta\mathfrak{L}}{\delta\eta}  &  =-ad_{\eta}^{\ast}%
\frac{\delta\mathfrak{L}}{\delta\eta}-\xi\overset{\ast}{\vartriangleright
}\frac{\delta\mathfrak{L}}{\delta\eta}-\mathfrak{b}_{\xi}^{\ast}\frac
{\delta\mathfrak{L}}{\delta\xi}. \label{mEP}
\end{align}
We note that as oppose to the matched Euler-Lagrange equations (\ref{mEL}) and (\ref{mEL-II}), the matched
Euler-Poincar\'{e} equations (\ref{mEP2}) and (\ref{mEP}) are symmetric with respect
to the Lie algebra variables.

Let us finally note that the Euler-Poincar\'{e} equations follow directly from the the
Euler-Lagrange equations if the Lagrangian $\mathfrak{L}=\mathfrak{L}\left(  g,h,\xi
,\eta\right)  $ is independent of both of the group variables $\left(g,h\right) \in G\bowtie H$.
In this case, (\ref{mpELcomp}) reduces to
\[
\frac{d}{dt}\left(  \frac{\delta\mathfrak{L}}{\delta\xi},\frac{\delta
\mathfrak{L}}{\delta\eta}\right)  =-ad_{\left(  \xi,\eta\right)  }^{\ast
}\left(  \frac{\delta\mathfrak{L}}{\delta\xi},\frac{\delta\mathfrak{L}}%
{\delta\eta}\right).
\]

\subsection{Lagrangian dynamics on semi-direct products}

In this subsection, we show how matched Euler-Lagrange equations (\ref{mEL}) and (\ref{mEL-II})
generalize the semi-direct product theory, \cite{cendra1998lagrangian, cendra2003variational, n2001lagrangian,
holm1998euler, MarsRatiWein84}.

To this end we write the actions, variations, Euler-Lagrange and Euler-Poincar\'{e}
equations on the tangent bundles of the semi-direct product groups $G\ltimes H$
and $G\rtimes H$ \cite{bruveris2011momentum}. In particular, the case of one of the groups being a vector space is
studied extensively in the literature \cite{bruveris2011momentum, cendra1998lagrangian, colombo2015lagrangian,
n2001lagrangian, CendMarsPekaRati03, guha2005euler, guha2005geodesic,
holm1998euler,MarsRatiWein84, ratiu1982lagrange}.

For the semi-direct product $G\ltimes H$, the left action of $H$ on $G$ is trivial, therefore, the
coadjoint action (\ref{coad}) reduces to
\begin{equation}
ad_{\left(  \xi,\eta\right)  }^{\ast}\left(  \mu,\nu\right)  =\left(  ad_{\xi
}^{\ast}\mu-\mu\overset{\ast}{\vartriangleleft}\eta-\mathfrak{a}_{\eta}^{\ast
}\nu,ad_{\eta}^{\ast}\nu\right)  ,
\end{equation}
whereas the variations (\ref{var}) becomes
\begin{align}
\delta g  &  =T_{e_{G}}L_{g}\left(  \xi_{2}\right)  ,\hbox{
\ \ }\delta h=T_{_{e_{G}}}\sigma_{h}\xi_{2}+T_{e_{H}}L_{h}\eta_{2},\\
\delta\xi &  =\dot{\xi}_{2}+[\xi,\xi_{2}],\hbox{
\ \ }\delta\eta=\dot{\eta}_{2}+[\eta,\eta_{2}]+\eta\vartriangleleft\xi
_{2}-\eta_{2}\vartriangleleft\xi.\nonumber
\end{align}
In this case, we obtain from the matched Euler-Lagrange equations (\ref{mEL}) and (\ref{mEL-II}) the semi-direct
product Euler-Lagrange equations
\begin{align}
\frac{d}{dt}\frac{\delta\mathfrak{L}}{\delta\xi}  &  =T_{e_{G}}^{\ast}%
L_{g}\left(  \frac{\delta\mathfrak{L}}{\delta g}\right)  +T_{e_{G}}^{\ast
}\sigma_{h}\left(  \frac{\delta\mathfrak{L}}{\delta h}\right)  -ad_{\xi}%
^{\ast}\frac{\delta\mathfrak{L}}{\delta\xi}+\mathfrak{a}_{\eta}^{\ast}%
\frac{\delta\mathfrak{L}}{\delta\eta}, \label{L/G-II}\\
\frac{d}{dt}\frac{\delta\mathfrak{L}}{\delta\eta}  &  =T_{e_{H}}^{\ast}%
L_{h}\left(  \frac{\delta\mathfrak{L}}{\delta h}\right)  -ad_{\eta}^{\ast
}\frac{\delta\mathfrak{L}}{\delta\eta}-\xi\overset{\ast}{\vartriangleright
}\frac{\delta\mathfrak{L}}{\delta\eta}, \label{L/G}%
\end{align}
on the space $\left(  G\ltimes H\right)  \ltimes\left(  \mathfrak{g\ltimes
h}\right) \simeq (G\ltimes\mathfrak{g})\ltimes
(H\ltimes\mathfrak{h})$.

If the Lagrangian $\mathfrak{L}=\mathfrak{L}\left(g, h,\xi,\eta\right)$ on
$\left(  G\ltimes H\right)  \ltimes\left(  \mathfrak{g\ltimes h}\right)$ is independent of
the group variable $g \in G$, then we obtain
the dynamical equations on $H\ltimes\left(  \mathfrak{g\ltimes h}\right)  $.
If, in addition, $\mathfrak{L}=\mathfrak{L}\left(  h,\xi,\eta\right)  $ is independent of
$h \in H$, then we obtain the semi-direct product Euler-Poincar\'{e} equations%
\begin{equation}
\frac{d}{dt}\frac{\delta L}{\delta\xi}=ad_{\xi}^{\ast}\frac{\delta L}%
{\delta\xi}+\mathfrak{a}_{\eta}^{\ast}\frac{\delta L}{\delta\eta},\hbox{
\ \ }\frac{d}{dt}\frac{\delta L}{\delta\eta}=ad_{\eta}^{\ast}\frac{\delta
L}{\delta\eta}-\frac{\delta L}{\delta\eta}\overset{\ast}{\vartriangleleft}\xi
\end{equation}
on $\mathfrak{g\ltimes h}$.

The dynamics on $G\rtimes H$ is similar. In this case, the right action of $G$ on $H$ is
assumed to be trivial, therefore, the coadjoint action reduces to
\begin{equation}
ad_{\left(  \xi,\eta\right)  }^{\ast}\left(  \mu,\nu\right)  =\left(  ad_{\xi
}^{\ast}\mu,ad_{\eta}^{\ast}\nu+\xi\overset{\ast}{\vartriangleright}%
\nu+\mathfrak{b}_{\xi}^{\ast}\mu\right),
\end{equation}
whereas the variations become
\begin{align}
\delta g  &  =T_{e_{G}}\left(  L_{g}\circ\rho_{h}\right)  \left(  \xi
_{2}\right)  ,\hbox{
\ \ }\delta h=\left(  \xi_{2}\right)  +T_{e_{H}}L_{h}\left(  \eta_{2}\right)
,\\
\delta\xi &  =\dot{\xi}_{2}+[\xi,\xi_{2}]+\eta\vartriangleright\xi_{2}%
-\eta_{2}\vartriangleright\xi,\hbox{
\ \ }\delta\eta=\dot{\eta}_{2}+[\eta,\eta_{2}].\nonumber
\end{align}
As a result, the semi-direct product Euler-Lagrange equations take the form
\begin{align}
\frac{d}{dt}\frac{\delta\mathfrak{L}}{\delta\xi}  &  =T_{e_{G}}^{\ast}%
L_{g}\left(  \frac{\delta\mathfrak{L}}{\delta g}\right)  \overset{\ast
}{\vartriangleleft}h-ad_{\xi}^{\ast}\frac{\delta\mathfrak{L}}{\delta\xi},\\
\frac{d}{dt}\frac{\delta\mathfrak{L}}{\delta\eta}  &  =T_{e_{H}}^{\ast}%
L_{h}\left(  \frac{\delta\mathfrak{L}}{\delta h}\right)  -ad_{\eta}^{\ast
}\frac{\delta\mathfrak{L}}{\delta\eta}-\xi\overset{\ast}{\vartriangleright
}\frac{\delta\mathfrak{L}}{\delta\eta}-\mathfrak{b}_{\xi}^{\ast}\frac
{\delta\mathfrak{L}}{\delta\xi},
\end{align}
on the space $\left(  G\rtimes H\right)  \ltimes\left(  \mathfrak{g\rtimes
h}\right) \simeq (G\ltimes\mathfrak{g})\rtimes
(H\ltimes\mathfrak{h})$. Similarly, the matched Euler-Poincar\'{e} equations (\ref{mEP2}) and
(\ref{mEP}) reduce to the semi-direct product Euler-Poincar\'{e} equations
\begin{equation}
\frac{d}{dt}\frac{\delta L}{\delta\xi}=ad_{\xi}^{\ast}\frac{\delta L}%
{\delta\xi}+\eta\overset{\ast}{\vartriangleright}\frac{\delta L}{\delta\xi
},\hbox{ \ \ }\frac{d}{dt}\frac{\delta L}{\delta\eta}=ad_{\eta}^{\ast}%
\frac{\delta L}{\delta\eta}-\mathfrak{b}_{\xi}^{\ast}\frac{\delta L}{\delta
\xi},
\end{equation}
on $\mathfrak{g\rtimes h}$.

\section{An example:\ $SL(2,\mathbb{C})$}

In this section we study the Lagrangian dynamics on the tangent bundle of
$SL(2,\mathbb{C})$. It is discussed in \cite{Maji90-II} in detail that $SL(2,\mathbb{C})$ is a matched pair
\begin{equation}
SU(2)\bowtie K. \label{ExMPLG}%
\end{equation}
of $SU(2)$, and its half-real form $K$. Lie algebra $\mathfrak{sl}(2,\mathbb{C})$ is then a matched
pair of Lie algebras
\begin{equation}
\mathfrak{su}(2)\bowtie\mathfrak{K},
\label{ExMPLA}%
\end{equation}
corresponding to the Iwasawa decomposition as
a real Lie algebra, where $\mathfrak{su}(2)$ is the Lie algebra of $SU(2)$ and
$\mathfrak{K}$ is the Lie algebra of the group $K$.

We shall make use of the notation
\begin{align}
A  &  \in SU(2)\text{,}\qquad B\in K\text{,}\qquad\mathbf{X},\mathbf{X}%
_{1},\mathbf{X}_{2}\in\mathfrak{su}(2)\simeq\mathbb{R}^{3},\qquad
\mathbf{Y},\mathbf{Y}_{1},\mathbf{Y}_{2}\in\mathfrak{K}\simeq\mathbb{R}%
^{3},\nonumber\\
\mathbf{\Phi}  &  \in\mathfrak{su}(2)^{\ast}\simeq\mathbb{R}^{3}\text{,}%
\qquad\mathbf{\Psi}\in\mathfrak{K}^{\ast}\simeq\mathbb{R}^{3},\qquad
\mathbf{k}=(0,0,1)\in\mathbb{R}^{3}, \label{Exnot}%
\end{align}
for the generic elements of the groups $SU(2)$ and $K$, and their Lie algebras
$\mathfrak{su}(2)$ and $\mathfrak{K}$. Here $\mathfrak{su}(2)^{\ast}$ and
$\mathfrak{K}^{\ast}$ are the linear duals of $\mathfrak{su}(2)$ and
$\mathfrak{K}$, respectively.

\subsection{Representations}\label{representations}

In this subsection we recall, from \cite{Maji90-II}, the various presentations of the matched pair groups $SU(2)$ and $K$, as well as their Lie algebras $\mathfrak{su}(2)$ and $\mathfrak{K}$.

The group $SU(2)$ is the matrix Lie group of the special unitary matrices
\begin{equation}
SU(2)=\left\{
\begin{pmatrix}
\omega & \vartheta\\
-\bar{\vartheta} & \bar{\omega}%
\end{pmatrix}
\in SL(2,\mathbb{C}):\left\vert \omega\right\vert ^{2}+\left\vert
\vartheta\right\vert ^{2}=1\right\}  \label{SU(2)-1}.
\end{equation}
It is the universal
double cover of the group $SO\left(  3\right)  $ of isometries of the Euclidean
space $\mathbb{R}^{3}$. The Lie algebra $\mathfrak{su}(2)$ of the group $SU(2)$ is then the matrix Lie algebra
\[
\mathfrak{su}(2)=\left\{  \frac{-\iota}{2}\left(
\begin{array}
[c]{cr}%
t & r-\iota s\\
r+\iota s & -t
\end{array}
\right)  :r,s,t\in\mathbb{R}\right\}
\]
of traceless skew-hermitian matrices. Following \cite{Maji90-II} we fix three matrices
\begin{equation}
e_{1}=\left(
\begin{array}
[c]{cc}%
0 & -\iota/2\\
-\iota/2 & 0
\end{array}
\right)  ,\,\,e_{2}=\left(
\begin{array}
[c]{cc}%
0 & -1/2\\
1/2 & 0
\end{array}
\right)  ,\,\,e_{3}=\left(
\begin{array}
[c]{cc}%
-\iota/2 & 0\\
0 & \iota/2
\end{array}
\right) \label{basissu(2)}%
\end{equation}
as a basis of the Lie algebra $\mathfrak{su}(2)$. We further make use of this basis to identify
the matrix Lie algebra $\mathfrak{su}(2)$ with the Lie algebra $\mathbb{R}^{3}$ by cross product,
\begin{equation}
re_{1}+se_{2}+te_{3}\in\mathfrak{su}(2)\longleftrightarrow\mathbf{X}=\left(
r,s,t\right)  \in\mathbb{R}^{3}. \label{su(2)toR}
\end{equation}
We also identify the dual space $\mathfrak{su}(2)^{\ast}$ of $\mathfrak{su}%
(2)\simeq\mathbb{R}^{3}$ with $\mathbb{R}^{3}$ using the Euclidean dot product. Using
this dualization, we express the coadjoint action of the Lie algebra
$\mathfrak{su}(2)\simeq\mathbb{R}^{3}$ on
$\mathfrak{su}^{\ast}(2)\simeq\mathbb{R}^{3}$ as%
\begin{equation}
ad^{\ast}: \mathfrak{su}(2)\times\mathfrak{su}^{\ast}(2)\rightarrow
\mathfrak{su}(2)^{\ast},\text{ \ \ } (\mathbf{X},\mathbf{\Phi}) \mapsto
ad_{\mathbf{X}}^{\ast}\mathbf{\Phi=X}\times\mathbf{\Phi}, {\label{ad*1}}%
\end{equation}
for $\mathbf{X}\in\mathfrak{su}(2)\simeq\mathbb{R}^{3}$ and $\mathbf{\Phi}%
\in\mathfrak{su}(2)^{\ast}\simeq\mathbb{R}^{3}$.

The simply-connected group $K$ is a subgroup of $GL(3,\mathbb{R})$ of the form
\begin{equation}\label{K-2}
K=\left\{
\begin{pmatrix}
1+c & 0 & 0\\
0 & 1+c & 0\\
-a & -b & 1
\end{pmatrix}
\in GL\left(  3,\mathbb{R}\right) \mid a,b\in \mathbb{R} \text{ and }c>-1\right\},
\end{equation}
where the group operation is the matrix multiplication. The Lie algebra $\mathfrak{K}$ of $K$ is given by
\begin{equation}
\mathfrak{K}=\left\{  \left(
\begin{array}
[c]{ccc}%
c & 0 & 0\\
0 & c & 0\\
-a & -b & 0
\end{array}
\right)  \in\mathfrak{gl}(3,\mathbb{R})\mid a,b,c\in\mathbb{R}\right\},
\label{k-2}%
\end{equation}
where the Lie bracket is the matrix commutator. A faithfull representation of the group $K$ as the subgroup of $SL(2,\mathbb{C})$ is given by \cite[Lemma 2.3]{Maji90-II}. Accordingly, we may also consider
\begin{equation}
K=\left\{  \frac{1}{\sqrt{1+c}}%
\begin{pmatrix}
1+c & 0\\
a+ib & 1
\end{pmatrix}
\in SL(2,\mathbb{C})\mid a,b\in\mathbb{R} \text{ and }c>-1\right\}  \label{K-1},
\end{equation}
the group operation being the matrix multiplication. In this case the Lie algebra $\mathfrak{K}$ is given by
\begin{equation}
\mathfrak{K}=\left\{  \left(
\begin{array}
[c]{cc}%
\frac{1}{2}c & 0\\
a+\iota b & \frac{-1}{2}c
\end{array}
\right)  \in\mathfrak{sl}\left(  2,\mathbb{C}\right)  \mid a,b,c\in
\mathbb{R}\right\}  \label{k-1}%
\end{equation}
with matrix commutator as the Lie algebra bracket. The group $K$ can, alternatively, be identified with the
subspace
\begin{equation}
K=\left\{  (a,b,c)\in%
\mathbb{R}
^{3} \mid a,b\in%
\mathbb{R}
\text{ and }c>-1\right\}   \label{K-3}%
\end{equation}
of $%
\mathbb{R}^{3}$ with the non-standard multiplication
\begin{equation*}
(a_{1},b_{1},c_{1})\ast(a_{2},b_{2},c_{2})=(a_{1},b_{1},c_{1})(1+c_{2}%
)+(a_{2},b_{2},c_{2}).
\end{equation*}
In this case the Lie
algebra $\mathfrak{K}$ is $\mathbb{R}^{3}$, and the Lie bracket is
\begin{equation}
\lbrack\mathbf{Y}_{1},\mathbf{Y}_{2}] = \mathbf{k}\times
(\mathbf{Y}_{1}\times\mathbf{Y}_{2}), \label{k-3}%
\end{equation}
where $\mathbf{k}$ is the unit vector $(0,0,1)$. In this case, using the dot product, we identify the dual space $\mathfrak{K}^{\ast}$  with $\mathbb{R}^3$ as well. Then, the coadjoint action of the Lie algebra $\mathfrak{K}%
\simeq\mathbb{R}^{3}$ on its dual space $\mathfrak{K}^{\ast} \simeq \mathbb{R}^3$ can be computed as
\begin{equation}
ad^{\ast}: \mathfrak{K}\times\mathfrak{K}^{\ast}\rightarrow\mathfrak{K}^{\ast
},\text{ \ \ } (\mathbf{Y},\mathbf{\Psi})\mapsto ad_{\mathbf{Y}}^{\ast
}\mathbf{\Psi}=\left(  \mathbf{k}\cdot\mathbf{Y}\right)  \mathbf{\Psi}-\left(
\mathbf{\Psi}\cdot\mathbf{Y}\right)  \mathbf{k}, {\label{ad*2}}%
\end{equation}
for any $\mathbf{Y}\in\mathfrak{K}\simeq\mathbb{R}^{3}$, and any $\mathbf{\Psi}%
\in\mathfrak{K}^{\ast}\simeq\mathbb{R}^{3}$.

The group isomorphisms relating (\ref{K-1}), (\ref{K-2}) and (\ref{K-3}) are
given by
\begin{equation}
\frac{1}{\sqrt{1+c}}%
\begin{pmatrix}
1+c & 0\\
a+ib & 1
\end{pmatrix}
\longleftrightarrow%
\begin{pmatrix}
1+c & 0 & 0\\
0 & 1+c & 0\\
-a & -b & 1
\end{pmatrix}
\longleftrightarrow(a,b,c). \label{Griso}%
\end{equation}
The Lie algebra isomorphisms
\begin{equation}
\left(
\begin{array}
[c]{cc}%
\frac{1}{2}c & 0\\
a+\iota b & \frac{-1}{2}c
\end{array}
\right)  \longleftrightarrow\left(
\begin{array}
[c]{ccc}%
c & 0 & 0\\
0 & c & 0\\
-a & -b & 0
\end{array}
\right)  \longleftrightarrow\left(  a,b,c\right)  \label{ktoR}%
\end{equation}
between (\ref{k-1}), (\ref{k-2}) and (\ref{k-3}) are then obtained by
differentiating (\ref{Griso}).

\subsection{Actions}

In this subsection we recall from \cite{Maji90-II} the mutual actions of the matched pair of Lie groups $(SU(2),K)$, as well as the mathed pair of Lie algebras $(\mathfrak{su}(2), \mathfrak{K})$.

Given any $A\in SU(2)$, and any $B\in K\subset SL(2,\mathbb{C})$, the left action of $K$ on $SU(2)$ is given by
\begin{equation}\label{KonSU(2)}
B\vartriangleright A   =\left\Vert BA\left(
\begin{array}
[c]{cc}%
0 & 0\\
0 & 1
\end{array}
\right)  \right\Vert _{M}^{-1}\left(  BA\left(
\begin{array}
[c]{cc}%
0 & 0\\
0 & 1
\end{array}
\right)  +{B^{-\dagger}}A\left(
\begin{array}
[c]{cc}%
1 & 0\\
0 & 0
\end{array}
\right)  \right),
\end{equation}
where $B^{-\dagger}$ denotes the inverse of the conjugate
transpose of $B \in K$, and $\left\Vert B\right\Vert _{M}^{2}=tr(B^{\dagger}B)$ refers to
the matrix norm on $SL(2,\mathbb{C})$.

Differentiating (\ref{KonSU(2)}) with respect to the $A \in SU(2)$, and
regarding $B\in K\subset GL(3,\mathbb{R})$ via (\ref{Griso}), we obtain
\begin{equation}
\mathbf{\vartriangleright}:K\times\mathfrak{su}\left(  2\right)
\rightarrow\mathfrak{su}\left(  2\right)  \text{, \ \ }\left(  B,\mathbf{X}%
\right)  \mapsto B\vartriangleright\mathbf{X} = B\mathbf{X},
\label{Konsu(2)}%
\end{equation}
for any $\mathbf{X}\in \mathfrak{su}%
\left(  2\right)  \simeq\mathbb{R}^{3}$.
On the next step, the derivative of (\ref{Konsu(2)}) with respect to $B\in K$ yields
\begin{equation}
\mathbf{\vartriangleright}:\mathfrak{K}\times\mathfrak{su}\left(  2\right)
\rightarrow\mathfrak{su}\left(  2\right)  \text{, \ \ }%
({\mathbf Y}, {\mathbf X})\mapsto \mathbf{Y\vartriangleright X} = \mathbf{Y}\times(\mathbf{X}\times\mathbf{k})
\label{YonX}%
\end{equation}
for any $\mathbf{Y} \in \mathfrak{K}\simeq\mathbb{R}^{3}$ and any
$\mathbf{X}\in \mathfrak{su}\left(  2\right)  \simeq\mathbb{R}^{3}$. Freezing $\mathbf{X}$ in \eqref{YonX}, we define the mapping
\begin{equation} {\label{b}}
\mathfrak{b}_{\mathbf{X}}:\mathfrak{K}\rightarrow \mathfrak{su}\left(  2\right), \qquad {\mathbf Y}\mapsto \mathbf{Y}\times(\mathbf{X}\times\mathbf{k}).
\end{equation}
The right action of $SU(2)$ on $K \subset \mathbb{R}^3$, on the other hand, is
\begin{equation}\label{SU(2)onK}
B\vartriangleleft A = \frac{\left\Vert {\mathbf B} \right\Vert _{E}^{2}}{2\left(  c+1\right)  }e_{3}+A\left(  B-\frac{\left\Vert {\mathbf B} \right\Vert
_{E}^{2}}{2\left(  c+1\right)  }e_{3}\right)  A^{-1}.
\end{equation}
Here $e_{3}$ is the third element of
the basis (\ref{basissu(2)}), wheras $\left\Vert {\mathbf B} \right\Vert _{E}^{2}$ refers to the
Euclidean norm in $\mathbb{R}^{3}$ in view of the identification (\ref{Griso}) of $B \in K \subset SL(2,\mathbb{C})$ with ${\mathbf B} \in K \subset \mathbb{R}^3$.

Accordingly, the derivative of (\ref{SU(2)onK}) with respect to $A\in SU(2)$ renders the infinitesimal right action
of the Lie algebra $\mathfrak{su}\left(  2\right)$ on $K$ as
\begin{equation}
\vartriangleleft:K\times\mathfrak{su}\left(  2\right)  \rightarrow TK,\qquad
(B, {\mathbf X})\mapsto B\vartriangleleft\mathbf{X} = T_{e_K}R_B\left(  \mathbf{X} \times\mathbf{\tilde{B}}\right)  , \label{XonB}%
\end{equation}
where $\mathbf{X}\in\mathfrak{su}\left(  2\right)  \simeq\mathbb{R}^{3}$, and
\begin{equation}
\mathbf{\tilde{B}} := \frac{1}{c+1}\mathbf{B}-\frac{\left\Vert
\mathbf{B}\right\Vert _{E}^{2}}{2(c+1)^{2}}\mathbf{k} \label{Btilda}%
\end{equation}
identifying once again  $B \in K \subset SL(2,\mathbb{C})$ with
${\mathbf B} \in K \subset \mathbb{R}^3$ via (\ref{Griso}). Here, $T_{e_K}R_B$
is the tangent lift of the right translation $R_B:K\rightarrow K$ by $B\in K$,
and it acts simply by the matrix multiplication regarding
$\mathbf{X} \times\mathbf{\tilde{B}} \in \mathfrak{K} \cong \mathbb{R}^3 \cong \mathfrak{gl}(3,\mathbb{R})$
via (\ref{ktoR}).

Finally, differentiating (\ref{XonB})
with respect to $B \in K$, we arrive at the right action
\[
\vartriangleleft:\mathfrak{K}\times\mathfrak{su}\left(  2\right)
\rightarrow\mathfrak{K,}\text{ \ \ }\left(  \mathbf{Y},\mathbf{X}\right)
\mapsto \mathbf{Y}\vartriangleleft\mathbf{X} = \mathbf{X\times Y,}
\label{XonY}%
\]
for any $\mathbf{X}\in\mathfrak{su}\left(  2\right)  \simeq\mathbb{R}^{3}$ and any
$\mathbf{Y}\in\mathfrak{K}\simeq\mathbb{R}^{3}$. Freezing $\mathbf{Y}$ in \eqref{XonY}, we define the mapping
\begin{equation} {\label{a}}
\mathfrak{a}_{\mathbf{Y}}:\mathfrak{su}\left(  2\right)\rightarrow \mathfrak{K}, \qquad \mathbf{X}\mapsto \mathbf{X\times Y}.
\end{equation}

We conclude this subsection with a brief discussion on the dual actions. Let us begin with
dualizing (\ref{Konsu(2)}) to obtain a right action of $K$ on the
dual space $\mathfrak{su}\left(  2\right)^{\ast}$ of $\mathfrak{su}\left(
2\right)  \simeq\mathbb{R}^{3}$. Identifying the dual space
$\mathfrak{su}\left(  2\right)^{\ast}$ with $\mathbb{R}^{3}$ by the dot product, we
obtain
\[
\mathbf{\overset{\ast}{\vartriangleleft}}:\mathfrak{su}\left(
2\right)^{\ast}  \times K\rightarrow\mathfrak{su}\left(  2\right)^{\ast}  \text{,
\ \ }\left(  \mathbf{\Phi,}B\right)  \mapsto \mathbf{\Phi}\overset{\ast
}{\vartriangleleft}B = B^{T}\mathbf{\Phi},
\]
where $B\in K\subset GL(3,\mathbb{R})$, $B^{T} \in GL(3,\mathbb{R})$ is the transpose
of the matrix $B \in GL(3,\mathbb{R})$, and
$\mathbf{\Phi}\in \mathfrak{su}\left(  2\right)^{\ast}  \simeq\mathbb{R}^{3}$.

We next dualize the left action (\ref{YonX}) to a right action of the Lie
algebra $\mathfrak{K}$ on the dual space $\mathfrak{su}\left(
2\right)^{\ast}$ as
\[
\mathbf{\overset{\ast}{\vartriangleleft}}:\mathfrak{su}\left(
2\right)^{\ast}  \times\mathfrak{K}\rightarrow\mathfrak{su}\left(  2\right)^{\ast}
\text{, \ \ }\left(  \mathbf{\Phi,Y}\right)  \mapsto \mathbf{\Phi
\overset{\ast}{\vartriangleleft}Y} = \left(  \mathbf{Y\cdot k}\right)
\mathbf{\Phi-}\left(  \mathbf{\Phi\cdot k}\right)  \mathbf{Y},
\]
where $\mathbf{k} = (0,0,1)$. On the other hand, the right action (\ref{XonY}) dualizes to the left action of $\mathfrak{su}(2)$ on $\mathfrak{K}^\ast$ as
\[
\overset{\ast}{\mathbf{\vartriangleright}}:\mathfrak{su}\left(  2\right)
\times\mathfrak{K}^{\ast}\rightarrow\mathfrak{K}^{\ast}\text{, \ \ }\left(
\mathbf{X,\Psi}\right)  \mapsto \mathbf{X}\overset{\ast}
{\mathbf{\vartriangleright}}\mathbf{\Psi = \Psi}\times\mathbf{X},
\]
for any $\mathbf{\Psi}\in\mathfrak{K}^{\ast}\simeq\mathbb{R}^{3}$.

Freezing the $\mathfrak{su}(2)$-component of (\ref{YonX}) we dualize (\ref{b}) to
\[
\mathfrak{b}_{\mathbf{X}}^{\ast}:\mathfrak{su}\left(  2\right)^{\ast}
\rightarrow\mathfrak{K}^{\ast}\text{, \ \ }\mathbf{\Phi}\mapsto
\mathfrak{b}_{\mathbf{X}}^{\ast}\mathbf{\Phi} = \left(  \mathbf{\Phi\cdot
k}\right)  \mathbf{X}-\left(  \mathbf{\Phi\cdot X}\right)  \mathbf{k}%
\]
for any $\mathbf{\Phi}\in \mathfrak{su}\left(  2\right)^{\ast}  \simeq
\mathbb{R}^{3}$, and any $\mathbf{X}\in \mathfrak{su}\left(  2\right)
\simeq\mathbb{R}^{3}$. Similarly, we dualize (\ref{a}) to
\[
\mathfrak{a}_{\mathbf{Y}}^{\ast}:\mathfrak{K}^{\ast}\rightarrow\mathfrak{su}%
\left(  2\right)^{\ast}  \text{, \ \ }\mathbf{\Psi}\mapsto \mathfrak{a}_{\mathbf{Y}}^{\ast}({\mathbf \Psi}) = \mathbf{Y\times
\Psi.}%
\]

We note finally the cotangent lift $T^\ast_{e_{SU(2)}}\sigma_{B}:T^\ast_BK \rightarrow \mathfrak{su}(2)^\ast$ of the mapping (\ref{sigma-h}). Explicitly,
\begin{equation}
T_{e_{SU(2)}}^{\ast}\sigma_{B}\left({\mathbf \Psi}_B\right) = \mathbf{\tilde{B}} \times \widehat{(\Psi_B B^T)}, \label{T*sigmaEx}
\end{equation}
where $\mathbf{\tilde{B}} \in \mathbb{R}^3$ was defined in (\ref{Btilda}), and
$\widehat{(\Psi_B B^T)}$ is the vector representation of $(\Psi_B B^T)$ which is an element of $T^{\ast}_{e_{K}}K$.

\subsection{Lagrangian Dynamics}

In this subsection we specialize the content of Subsection \ref{lagrangian-dynamics-matched-pairs} to the group $SL(2,\mathbb{C})$.

Let us recall first that both of the Lie algebras $\mathfrak{su}(2)$ and $\mathfrak{K}$ are isomorphic
to $\mathbb{R}^{3}$ as introduced in (\ref{su(2)toR}) and (\ref{ktoR}), respectively. As a result, we have the matched pair decomposition $\mathfrak{sl}\left(  2,\mathbb{C}\right) = \mathbb{R}^{3}\bowtie \mathbb{R}^{3}$.

Following the previous notation, let $\mathfrak{L}=\mathfrak{L}\left(  A,B,\mathbf{X},\mathbf{Y}\right)$
be a Lagrangian density on the trivialized tangent bundle
$(SU(2)\bowtie K)\ltimes(\mathbb{R}^{3}\bowtie \mathbb{R}^{3})$. Substituting the dual actions of the previous
subsection, we obtain the matched Euler-Lagrange equations (\ref{mEL}) and (\ref{mEL-II}) as
\begin{align}\nonumber
& \frac{d}{dt}\frac{\delta\mathfrak{L}}{\delta\mathbf{X}} = \\
& B^{T}%
\widehat{\left(  A^{\dagger}\frac{\delta\mathfrak{L}}{\delta A}\right)
}+\mathbf{\tilde{B}} \times \widehat{(\Psi_B B^T)}-\mathbf{X}\times\frac{\delta\mathfrak{L}}{\delta\mathbf{X}}+\left(
\mathbf{Y\cdot k}\right)  \frac{\delta\mathfrak{L}}{\delta\mathbf{X}%
}\mathbf{-}\left(  \frac{\delta\mathfrak{L}}{\delta\mathbf{X}}\cdot
\mathbf{k}\right)  \mathbf{Y+Y\times}\frac{\delta\mathfrak{L}}{\delta
\mathbf{Y}},\label{mELEx-I}\\\nonumber
& \frac{d}{dt}\frac{\delta\mathfrak{L}}{\delta\mathbf{Y}} = \\
& \widehat
{B^{T}\frac{\delta\mathfrak{L}}{\delta B}}-\left(  \mathbf{k}\cdot
\mathbf{Y}\right)  \frac{\delta\mathfrak{L}}{\delta\mathbf{Y}}-\left(
\frac{\delta\mathfrak{L}}{\delta\mathbf{Y}}\cdot\mathbf{Y}\right)
\mathbf{k-}\frac{\delta\mathfrak{L}}{\delta\mathbf{Y}}\times\mathbf{X-}\left(
\frac{\delta\mathfrak{L}}{\delta\mathbf{X}}\cdot\mathbf{k}\right)
\mathbf{X}+\left(  \frac{\delta\mathfrak{L}}{\delta\mathbf{X}}\cdot
\mathbf{X}\right)  \mathbf{k},\label{mELEx}
\end{align}
where the hat denotes the isomorphic image in $\mathbb{R}^3$.

In case the Lagrangian $\mathfrak{L}=\mathfrak{L}\left(  A,B,\mathbf{X}%
,\mathbf{Y}\right)  $ is free of the group variable $A\in SU(2,\mathbb{C})$,
that is, if the system is left invariant under the left action of
$SU(2,\mathbb{C})$, the term involving $\delta\mathfrak{L/}%
\delta A$ in (\ref{mELEx}) drops.
Then, the reduced Lagrangian $\mathfrak{L}=\mathfrak{L}\left(
B,\mathbf{X},\mathbf{Y}\right)  $ can be defined on the quotient space
$K\ltimes(%
\mathbb{R}
^{3}\bowtie%
\mathbb{R}
^{3})$ which is isomorphic to $%
\mathbb{R}
^{3}\bowtie(K\ltimes%
\mathbb{R}
^{3})$. This corresponds to the reduction presented on the left hand side
of (\ref{LagRedT}). As indicated in the diagram, a second
step of reduction can be achieved by reducing $K\ltimes(\mathbb{R}^{3}\bowtie\mathbb{R}^{3})$
to $\mathbb{R}^{3}\bowtie\mathbb{R}^{3}$, which eventually results with a
Lagrangian $\mathfrak{L} = \mathfrak{L}\left(
\mathbf{X},\mathbf{Y}\right)$. As a result, we obtain the Euler-Poincar\'{e} equations on $%
\mathbb{R}
^{3}\bowtie%
\mathbb{R}
^{3}$ as
\begin{align}
\frac{d}{dt}\frac{\delta\mathfrak{L}}{\delta\mathbf{X}}  &  =-\mathbf{X}%
\times\frac{\delta\mathfrak{L}}{\delta\mathbf{X}}+\left(  \mathbf{Y\cdot
k}\right)  \frac{\delta\mathfrak{L}}{\delta\mathbf{X}}\mathbf{-}\left(
\frac{\delta\mathfrak{L}}{\delta\mathbf{X}}\cdot\mathbf{k}\right)
\mathbf{Y+Y\times}\frac{\delta\mathfrak{L}}{\delta\mathbf{Y}},\label{mEPEx-I}\\
\frac{d}{dt}\frac{\delta\mathfrak{L}}{\delta\mathbf{Y}}  &  =-\left(
\mathbf{k}\cdot\mathbf{Y}\right)  \frac{\delta\mathfrak{L}}{\delta\mathbf{Y}%
}-\left(  \frac{\delta\mathfrak{L}}{\delta\mathbf{Y}}\cdot\mathbf{Y}\right)
\mathbf{k-}\frac{\delta\mathfrak{L}}{\delta\mathbf{Y}}\times\mathbf{X-}\left(
\frac{\delta\mathfrak{L}}{\delta\mathbf{X}}\cdot\mathbf{k}\right)
\mathbf{X}+\left(  \frac{\delta\mathfrak{L}}{\delta\mathbf{X}}\cdot
\mathbf{X}\right)  \mathbf{k.} \label{mEPEx}%
\end{align}
In particular, for the Lagrangian $\mathfrak{L}$ as the total kinetic
energy%
\[
\mathfrak{L}\left(  \mathbf{X,Y}\right)  =\mathbb{I}_{1}\mathbf{X\cdot
X}+\mathbb{I}_{2}\mathbf{Y\cdot Y},%
\]
where $\mathbb{I}_{1}$ and $\mathbb{I}_{2}$ are three by three matrices
corresponding to momentum inertia tensors, the
matched Euler-Poincar\'{e} equations (\ref{mEPEx-I}) and (\ref{mEPEx}) become
\begin{align*}
\mathbb{I}_{1}\mathbf{\dot{X}}  &  =-\mathbf{X}\times\mathbb{I}_{1}%
\mathbf{X}+\left(  \mathbf{Y\cdot k}\right)  \mathbb{I}_{1}\mathbf{X-}\left(
\mathbb{I}_{1}\mathbf{X}\cdot\mathbf{k}\right)  \mathbf{Y+Y}\times
\mathbb{I}_{2}\mathbf{Y},\\
\mathbb{I}_{2}\mathbf{\dot{Y}}  &  =-\left(  \mathbf{k}\cdot\mathbf{Y}\right)
\mathbb{I}_{2}\mathbf{Y}-\left(  \mathbb{I}_{2}\mathbf{Y}\cdot\mathbf{Y}%
\right)  \mathbf{k-}\mathbb{I}_{2}\mathbf{Y}\times\mathbf{X-}\left(
\mathbb{I}_{1}\mathbf{X}\cdot\mathbf{k}\right)  \mathbf{X}+\left(
\mathbb{I}_{1}\mathbf{X\cdot X}\right)  \mathbf{k.}%
\end{align*}

\section{Conclusion and Discussions}

Given a matched pair of Lie groups $G\bowtie H$, we obtained an isomorphism between the tangent
bundle $T\left(  G\bowtie H\right)  $ and the matched pair of tangent groups
$TG\bowtie TH$. Accordingly, we derived the matched Euler-Lagrange equations
(\ref{mEL}) on the (left) trivialization of $TG\bowtie TH$.
Using proper reductions, we further obtained the matched Euler-Poincar\'{e} equations
(\ref{mEP}) on $\mathfrak{g}\bowtie\mathfrak{h}$. As an
example, we considered the Iwasawa decomposition $SU(2)\bowtie K$ of $SL(2,\mathbb{C})$. We then observed that, on the left trivialization of $T\left(
SU(2)\bowtie K\right)$, the matched Euler-Lagrange equations take
the form of (\ref{mELEx}), whereas the matched Euler-Lagrange
equations on the Lie algebra $\mathfrak{sl}(2,\mathbb{C})$ become (\ref{mEPEx}).

As a natural complement of the Lagrangian dynamics on a matched pair of Lie groups,
on the next step we plan to study the Hamiltonian dynamics on a matched pair of Lie groups, \cite{essu2016HamMP}. Being a
cotangent bundle, $T^{\ast}\left(  G\bowtie H\right)  $ carries a symplectic
two-form which enables one to write the canonical Hamilton's equations. Applying
the symplectic and the Poisson reductions to the symplectic
framework on $T^{\ast}\left(  G\bowtie H\right)$ we will be able to obtain several reduced
Hamiltonian formulations.

On the other hand, since the linear algebraic dual $\left(  \mathfrak{g}%
\bowtie\mathfrak{h}\right)^{\ast}$ of the Lie algebra $\mathfrak{g}%
\bowtie\mathfrak{h}$ is the matched pair of Lie coalgebras, it will be
interesting to discuss the dual pair setting of \cite{gay2012dual} from the point of view of the
Lie-Poisson formulation on the matched pairs of Lie coalgebras.

From the decomposition point of view, another natural follow up to our study in the present paper is to
consider the Kac decomposition \cite{Kac68,MoscRang09} on diffeomorphism groups, to investigate the plasma dynamics.

\section*{Acknowledgement}

We would like to thank Hasan G\"{u}mral for many inspirational and enlightening conversations, especially for pointing out the example $SL(2,\mathbb{C})$, and to Michel Cahen for his careful reading of the manuscript, and his suggestions.

\bibliographystyle{plain}
\bibliography{references}{}

\end{document}